\documentclass[conference]{IEEEtran}
\IEEEoverridecommandlockouts
\usepackage{cite}
\usepackage{subcaption}

\usepackage{dblfloatfix}
\usepackage{overpic}
\usepackage{amsmath,amssymb,amsfonts}
\usepackage{graphicx}
\usepackage{textcomp}
\usepackage{xcolor}
\usepackage{float}
\usepackage{amsthm}
\usepackage{graphicx}
\usepackage{epstopdf}
\usepackage{amsmath,bm}
\usepackage{amsfonts}
\usepackage{amssymb}
\usepackage{color}
\usepackage{multirow}
\usepackage{multicol}
\usepackage{soul,xcolor}
\usepackage{tcolorbox}
\usepackage{algorithm}
\usepackage{algpseudocode}
\usepackage{empheq}

\theoremstyle{plain}
\newtheorem{theorem}{Theorem}

\newtheorem{remark}{Remark}

\newcommand{\vect}[1]{\mathbf{#1}}

\def\diag{\mathrm{diag}}

\def\tr{\mathrm{tr}}

\def\Htran{\mbox{\tiny $\mathrm{H}$}}
\def\Ttran{\mbox{\tiny $\mathrm{T}$}}
\def\CN{\mathcal{N}_{\mathbb{C}}} 
\def\imagunit{\mathsf{j}} 
\def\m{\rm}

\begin{document}

\title{NCR vs. Passive/Active RIS: How Much NCR Amplification is Required to Beat RIS?}

\author{\"Ozlem Tu\u{g}fe Demir, Ozan Alp Topal, Cicek Cavdar, and Emil Bj{\"o}rnson \vspace{-2cm}
\thanks{ \"O. T. Demir is with the Department of Electrical and Electronics Engineering, Bilkent University, Ankara, Turkiye (ozlemtugfedemir@bilkent.edu.tr).  O. A. Topal, C. Cavdar, and E.~Bj\"ornson are with the Department of Communication Systems, KTH Royal Institute of Technology, Stockholm, Sweden (oatopal@kth.se, cavdar@kth.se, emilbjo@kth.se). 

The work by \"O. T. Demir was supported by 2232-B International Fellowship for Early Stage Researchers Programme funded by the Scientific and Technological Research Council of T\"urkiye. \newline 
}
}

\maketitle
\begin{abstract}
This paper investigates the fundamental tradeoff between reconfigurable intelligent surfaces (RISs) and network-controlled repeaters (NCRs) in terms of achievable signal-to-noise ratio (SNR). Considering an uplink system with a multi-antenna base station (BS) and a single-antenna user equipment (UE), we derive closed-form SNR expressions for passive RIS-, active RIS-, and NCR-assisted communication under line-of-sight propagation between the BS-RIS/NCR and RIS/NCR-UE. Both narrowband and wideband transmissions are analyzed, with and without the presence of a direct BS--UE link. Our analysis reveals a key structural difference: while the SNR achieved with RISs grows unboundedly with the number of RIS elements, the SNR provided by an NCR is fundamentally limited by the UE--repeater channel due to noise amplification. Nevertheless, we show that NCRs can outperform both passive and active RISs when deployed close to the UE, provided that sufficient amplification is available. Numerical results based on realistic path loss models quantify the amplification levels required for NCRs to outperform RISs across different deployment geometries and system dimensions. These findings provide clear design guidelines for the practical integration of RISs and NCRs in future wireless networks.
\end{abstract}

\begin{IEEEkeywords}
Network-controlled repeater, active RIS, passive RIS, LOS channels.
\end{IEEEkeywords}

\section{Introduction}

Reconfigurable intelligent surfaces (RISs) are nearly passive electromagnetic structures that can reconfigure the propagation environment between a transmitter and a receiver~\cite{liu2021reconfigurable}. They are particularly attractive for establishing virtual line-of-sight (LOS) links when the direct path is blocked. Even in the presence of a direct link, RISs can enhance spectral efficiency by inducing phase shifts that align reflected signals for constructive superposition. Although RISs are not the first technology proposed to manipulate the radio propagation environment, they provide a low-cost, real-time, full-duplex relaying solution with very low radiated power consumption~\cite{bjornson2021reconfigurable}. This hardware simplicity, however, comes at the cost of requiring very large apertures to compensate for the severe double-cascaded pathloss inherent to RIS-assisted links.

To mitigate this limitation, RIS elements can be equipped with active circuits that provide signal amplification in addition to phase shifting. In contrast, a \emph{passive RIS} is limited to applying unit-modulus phase shifts without any signal amplification. The so-called \emph{active RIS}, has been shown to outperform passive RISs in several scenarios~\cite{zhang2022active,zhi2022active,basar2024reconfigurable}. Nevertheless, since active RISs also generate and amplify receiver noise at the RIS elements, the amplification must be carefully designed to avoid excessive noise levels. For both passive and active RISs, the primary SNR gain stems from the large number of RIS elements, which in turn requires substantial control signaling to configure the RIS phase or amplitude responses in every coherence block.

An alternative low-cost, real-time relaying technology is the \emph{network-controlled repeater} (NCR), which has already been standardized in 3GPP Release~18~\cite{wen2024shaping} and is a well-established solution for coverage extension in wireless networks~\cite{tsai2010capacity}. NCRs amplify and forward the received signal—along with the associated receiver noise— with processing delays on the order of a few hundred nanoseconds~\cite{willhammar2025achieving}, effectively acting as active scatterers. Compared to active RIS architectures, NCRs can typically support higher amplification levels due to their centralized hardware design, whereas the per-element amplification in active RIS is limited by circuit constraints. As will be shown in this paper, this difference can enable NCRs to outperform both passive and active RISs in certain operating regimes.  

Beyond coverage enhancement, NCRs have recently been proposed as a means to increase macro diversity within a cell, giving rise to the concept of repeater-assisted massive MIMO (RA-MIMO)~\cite{willhammar2025achieving,topal2025fair}. In addition, NCR operation under practical hardware effects and wideband signaling has recently been investigated in dedicated repeater-centric studies~\cite{demir2026pa,evgur2026ofdm}.

Both RISs and NCRs operate without requiring full baseband processing and decoding at intermediate nodes. Nevertheless, RIS-based solutions suffer from several practical limitations, including bulky implementations and significant control and training overhead~\cite{willhammar2025achieving}. These limitations become particularly critical in emerging upper midband and mmWave 6G deployments, where coverage remains a major bottleneck due to increased path loss and blockage sensitivity. In such regimes, cost-effective and low-latency coverage extension solutions are essential.

Prior works have investigated RISs and NCRs as coverage extenders mainly through numerical and system-level comparisons. For instance, the study in~\cite{ayoubi2022network} shows that the relative performance of RISs and NCRs strongly depends on the deployment geometry, suggesting that different scenarios may favor different technologies. Similarly,~\cite{guo2022comparison} demonstrates through simulations that NCR-assisted networks can often outperform RIS-assisted systems in terms of achievable user equipment (UE) rates under certain parameter settings. While these studies provide valuable insights, they offer limited analytical understanding of the fundamental conditions under which one technology outperforms the other.

A fundamental performance distinction among passive RISs, active RISs, and NCRs arises from their noise behavior. Passive RISs are effectively noise-free, whereas both active RISs and NCRs introduce and amplify receiver noise. However, the extent and impact of this noise amplification differ due to their underlying architectures and achievable amplification levels, which makes system-level modeling essential to accurately capture the tradeoffs among these technologies~\cite{sun2023performance}.

To address these limitations, this paper develops an analytical framework to compare RIS- and NCR-assisted systems. In particular, we derive closed-form conditions under which NCRs outperform passive and active RISs in terms of uplink signal-to-noise ratio (SNR) in a system with a multi-antenna base station (BS) and a single-antenna UE. To obtain interpretable insights, we focus on LOS propagation over the BS--RIS/NCR and RIS/NCR--UE links and consider both narrowband and wideband transmissions, with and without a direct BS--UE component. The resulting expressions provide explicit design guidelines that quantify the interplay among the number of BS antennas, the number of RIS elements, the repeater amplification gain, and the deployment geometry. The main contributions of this paper have been outlined as follows:
\begin{itemize}
    \item 
    In contrast to prior NCR literature, which primarily focuses on repeater-aided MIMO performance \cite{topal2025fair,andersson2025repeater}, or on modeling repeater-assisted sensing/communication tradeoffs \cite{chowdhury2025performance}, this paper derives tractable closed-form SNR expressions and uses them to obtain explicit inequalities that determine when NCRs beat RIS-based links.

    \item We characterize geometry-dependent operating regimes and reveal how deployment conditions, amplification capability, and array sizes determine the preferable technology, thereby providing guidelines for hybrid RIS–NCR deployments.

    \item We extend the analysis to both narrowband and wideband systems and show that the key qualitative insights remain consistent across different transmission regimes.

\end{itemize}

The remainder of this paper is organized as follows. Section~II introduces the system and channel models for passive RIS-, active RIS-, and NCR-assisted uplink communication. In Section~III, we derive closed-form SNR expressions and establish analytical conditions under which NCRs outperform passive and active RISs in the absence of a direct BS--UE link. Section~IV extends the analysis to scenarios with a direct non-LOS (NLOS) BS--UE path and examines the resulting performance interplay. In Section~V, we generalize the comparison to wideband transmissions and discuss the impact of frequency-dependent phase variations. Numerical results are presented throughout to illustrate the analytical findings under realistic deployment geometries. Finally, Section~VI concludes the paper and outlines directions for future work.

\section{System and Channel Model} \label{sec2}

We consider an uplink communication scenario between a BS equipped with $M$ antennas and a single-antenna UE. In this section, we focus on narrowband transmission and later extend the model to the wideband case. The direct BS--UE link is assumed to be completely blocked; its impact is analyzed separately in a later section. Hence, communication is assisted either by a passive RIS, an active RIS, or an NCR.

We first derive the SNR for passive and active RIS-assisted communication, and then derive the corresponding expression for the NCR-assisted case. Throughout the analysis, we do not impose an explicit total power budget constraint. Instead, all technologies are studied under maximum transmit power constraints imposed by circuit-level limitations.

This modeling choice is motivated by the fact that, in both active RIS and NCR architectures, the radiated transmit power is often negligible compared to the static power consumption of the hardware. In practice, the overall power consumption is typically dominated by technology-dependent static components, such as FPGA-based control boards in RISs and power amplifier circuitry in NCRs, while the transmit power contributes only marginally to the total energy consumption. 

For instance, reported implementations in~\cite{wang2024reconfigurable} indicate static power consumptions on the order of $6.52$\,W and $15.73$\,W for two different RIS deployments, whereas NCR prototypes have been reported to consume around $10$\,W in~\cite{willhammar2025achieving}. These values should be interpreted as indicative examples rather than definitive benchmarks, since both RIS and NCR technologies are still under active development and their hardware characteristics can vary significantly across implementations.

Moreover, for practically relevant numbers of RIS elements, the static power consumption of passive RISs, active RISs, and NCR-based solutions is typically of the same order of magnitude. As a result, differences in total power consumption across these technologies are largely driven by static hardware components and do not scale significantly with the transmitted signal power.

\subsection{Passive RIS-assisted communication}

The passive RIS consists of $N$ elements, and the channel between the UE and the RIS is assumed to be free-space LOS, expressed as  
\begin{align}
    \vect{h}_1 = \sqrt{\beta_1} \vect{a}_1,
\end{align}
where $\vect{a}_1$ denotes the array response vector with unit-modulus entries.\footnote{The assumption of identical channel gains across all antennas remains valid even in the radiative near-field, provided that the UE--RIS distance is at least twice the aperture length of the RIS \cite{bjornson2020power}.}

The RIS--BS channel is modeled as a far-field free-space LOS channel, given by  
\begin{align}
    \vect{H}_2 = \sqrt{\beta_2} \vect{a}_{2,1} \vect{a}_{2,2}^{\Ttran},
\end{align}
where $\beta_2$ is the channel gain and $\vect{a}_{2,1}$ and $\vect{a}_{2,2}$ are the corresponding array response vectors with unit-modulus entries. Then, the overall RIS-assisted channel becomes  
\begin{align}
    \vect{h} = \vect{H}_2 \vect{\Psi}_{\rm P} \vect{h}_1,
\end{align}
where the matrix $\vect{\Psi}_{\rm P}=\diag(\psi_{{\rm P},1},\ldots, \psi_{{\rm P},N})$ includes the phase-shift responses of the passive RIS at its diagonal entries, which are complex scalars on the unit circle. The received signal at the BS is then given by
\begin{align}
    \vect{y} = \sqrt{P}\vect{h}s + \vect{n},
\end{align}
where $s$ is the transmitted symbol, $P > 0$ is the UE transmit power, and $\vect{n}\sim\CN(\vect{0},\sigma^2\vect{I}_M)$ denotes the additive white Gaussian noise (AWGN) at the BS receiver. Under optimal maximum-ratio (MR) receive combining, the resulting SNR is
\begin{align}
    \mathrm{SNR}_{\rm P-RIS} = \frac{P \Vert \vect{h} \Vert^2}{\sigma^2}.
\end{align}
 The RIS phase-shift configuration that maximizes the SNR is  
\begin{align}
    \psi_{{\rm P},n} = \exp\left(\imagunit(-\angle{[\vect{a}_1]_n} - \angle{[\vect{a}_{2,2}]_n})\right),
\end{align}
which yields  
\begin{align}
    \vect{h} = \sqrt{\beta_1 \beta_2} N \vect{a}_{2,1}.
\end{align}
The resulting maximum SNR obtained with a passive RIS is  
\begin{align}
    \mathrm{SNR}_{\rm P-RIS} = \frac{P \beta_1 \beta_2 N^2 M}{\sigma^2}.
\end{align} 

\subsection{Active RIS-assisted communication}

The active RIS is also equipped with $N$ elements, and the channels between the UE and the RIS, as well as between the RIS and the BS, are assumed to be identical to those defined in the previous section. The received signal at the active RIS is given in vector form as
\begin{align}
    \vect{y}_1 = \sqrt{P}\vect{h}_1s+\vect{n}_1
\end{align}
where $\vect{n}_1\sim \CN(\vect{0}, \sigma^2\vect{I}_N)$ denotes the independent receiver noise at the active RIS elements. The amplified signal by the active RIS is given as
\begin{align}
    \tilde{\vect{y}}_1 = \vect{\Psi}_{\rm A} \vect{y}_1
\end{align}
where the $\vect{\Psi}_{\rm A}=\diag(\psi_{{\rm A},1},\ldots, \psi_{{\rm A},N})$ matrix includes the complex responses of the active RIS at its diagonal entries. The magnitudes of these complex responses may exceed unity in an active RIS; however, this inevitably leads to noise amplification. In practice, the maximum amplification factor is constrained by the hardware limitations of the RIS element circuitry \cite{gavriilidis2025active}.

Then, the received signal at the BS is given as 
\begin{align}
    \vect{y}_2 = \vect{H}_2 \tilde{\vect{y}}_1 +\vect{n}_2 = \sqrt{P}\vect{H}_2\vect{\Psi}_{\rm A} \vect{h}_1s+\vect{H}_2\vect{\Psi}_{\rm A}\vect{n}_1 + \vect{n}_2
\end{align}
where $\vect{n}_2\sim\CN(\vect{0}, \sigma^2\vect{I}_M)$ is the AWGN at the BS receiver. Since both the desired signal and the amplified noise arrive through the same spatial direction $\vect{a}_{2,1}$, the SNR is maximized by using the receive combining vector $\vect{a}_{2,1}$. The resulting SNR is
\begin{align}
\mathrm{SNR}_{\rm A-RIS}=\frac{P\beta_1\beta_2M\left|\vect{a}_{2,2}^{\Ttran}\vect{\Psi}_{\rm A}\vect{a}_1\right|^2}{\sigma^2\left(1+\beta_2M\sum_{n=1}^N\left|\psi_{{\rm A},n}\right|^2\right)}.
\end{align}
The RIS configuration that maximizes the SNR under a maximum amplification gain of $\alpha_{\rm A-RIS}$ per element can be obtained as follows. The phase shifts only affect the coherent combination of the desired signal and are therefore chosen as in the passive RIS case to achieve phase alignment. Moreover, since the SNR is monotonically increasing in each amplification coefficient $|\psi_{{\rm A},n}|$, the maximum gain is attained by setting all amplitudes to their maximum value $\alpha_{\rm A-RIS}$. This yields
\begin{align}
    \psi_{{\rm A},n} =\alpha_{\rm A-RIS} \exp\left(\imagunit(-\angle{[\vect{a}_1]_n} - \angle{[\vect{a}_{2,2}]_n})\right),
\end{align}
which leads to the maximum SNR  
\begin{align}
    \mathrm{SNR}_{\rm A-RIS} = \frac{\alpha_{\rm A-RIS}^2P \beta_1 \beta_2 N^2 M}{\sigma^2\left(1+\alpha_{\rm A-RIS}^2\beta_2MN\right)}.
\end{align}

\subsection{NCR-assisted communication}

Unlike active RIS architectures, which consist of many distributed elements each applying small amplification factors, an NCR is a centralized device that applies a single high-gain amplification stage to the received signal. Although both systems can be described within a similar signal model, their hardware architectures and operating regimes are fundamentally different.

We now consider NCR-assisted communication, where the NCR applies an amplification gain $\alpha_{\rm NCR}$. The scalar channel between the UE and the NCR is modeled as  
\begin{align}
    h_1 = \sqrt{\beta_1} e^{\imagunit \varphi_1},
\end{align}
while the NCR--BS channel is  
\begin{align}
    \vect{h}_2 = \sqrt{\beta_2} \vect{a}_{2,1}.
\end{align}

The received signal at the repeater is  
\begin{align}
    y_1 = \sqrt{P} h_1 s + n_1,
\end{align}
where $n_1 \sim \CN(0, \sigma^2)$ denotes independent receiver noise at repeater. The BS then receives  
\begin{align}
    y_2 &= \alpha_{\rm NCR} y_1 \vect{h}_2 + \vect{n}_2 \nonumber \\
        &= \alpha_{\rm NCR} \sqrt{P \beta_1 \beta_2}\, \vect{a}_{2,1} e^{\imagunit \varphi_1} s 
        + \alpha_{\rm NCR} \sqrt{\beta_2}\, n_1 \vect{a}_{2,1} + \vect{n}_2,
\end{align}
with $\vect{n}_2 \sim \CN(\vect{0}, \sigma^2 \vect{I}_M)$. The optimal linear receiver is MR combining since both the desired signal and noise share the same spatial direction. The maximized SNR is then  
\begin{align}
    \mathrm{SNR}_{\rm NCR} = \frac{\alpha_{\rm NCR}^2 P M \beta_1 \beta_2}{\sigma^2 \left( 1 + \alpha_{\rm NCR}^2 \beta_2 M \right)}.
\end{align}

\section{Comparison between passive/active RIS and NCR}\label{sec3}

In this section, we will compare the SNRs achieved with passive/active RIS and NCR, respectively, and quantify the amplification gain required by NCR to beat RISs. 

\subsection{Comparison between passive RIS and NCR}
For the NCR to outperform the passive RIS, $\alpha_{\rm NCR}$ must satisfy  
\begin{align}
    &\frac{\alpha_{\m NCR}^2 P M \beta_1 \beta_2}{\sigma^2 (1 + \alpha_{\m NCR}^2 \beta_2 M)} 
    \geq \frac{P \beta_1 \beta_2 N^2 M}{\sigma^2} \\
    &\Longleftrightarrow\alpha_{\m NCR}^2 \left( 1 - N^2 M \beta_2 \right) \geq N^2 \\
    &\Longleftrightarrow\alpha_{\m NCR} \geq \frac{N}{\sqrt{1 - N^2 M \beta_2}}, \label{eq:condition-no-direct}
\end{align}
under the assumption $1 - N^2 M \beta_2 > 0$. By contrast, if $N^2 M \beta_2 \geq 1$, the RIS always achieves a higher SNR. For realistic $\beta_2$ values, however, this condition requires extremely large numbers of RIS elements and/or BS antennas. For example, even if $\beta_2$ is relatively large (e.g., $\beta_2 = -60$\,dB), a BS with $M = 100$ antennas would require at least $N \geq 100$ RIS elements.  

Importantly, the RIS-assisted SNR has no inherent upper bound (as a function of $N$), whereas the repeater-assisted SNR is upper-bounded as
\begin{align}
\mathrm{SNR}_{\rm NCR} 
= \frac{\alpha_{\rm NCR}^2 P M \beta_1 \beta_2}{\sigma^2 (1 + \alpha_{\rm NCR}^2 \beta_2 M)}
\leq \frac{P \beta_1}{\sigma^2}, \label{eq:SNR-limit}
\end{align}
where the upper bound follows from taking $\alpha_{\rm NCR} \to \infty$. The upper bound
corresponds to the SNR of the UE--NCR link if the repeater were replaced by a co-located receiver with identical noise statistics. When $N^2 M \beta_2 \geq 1$, it is evident that $\mathrm{SNR}_{\rm P-RIS}$ always exceeds this upper bound.  

The lower bound on $\alpha_{\m NCR}$ in \eqref{eq:condition-no-direct}, i.e., $\frac{N}{\sqrt{1 - N^2 M \beta_2}}$, is independent of $\beta_1$ but increases with $N$, $M$, and $\beta_2$. Thus, larger system dimensions and stronger BS links demand increasingly higher repeater amplification gains to outperform the passive RIS.

\subsubsection*{Numerical illustration}

We set the wavelength to $\lambda = 0.02$\,m, corresponding to $f_c = 15$\,GHz, which lies in the upper midband. We adopt the free-space path loss model with isotropic antennas, i.e., $\beta_2 = \lambda^2 / (4\pi d_2)^2$, where $d_2$ denotes the propagation distance. We vary $d_2$ in Fig.~\ref{fig1} for different values of $N$ and show the required value of $\alpha_{\rm NCR}$ to provide the largest SNR, obtained from \eqref{eq:condition-no-direct}.

Fig.~\ref{fig1} shows that, when the RIS is moderately sized, the NCR outperforms the RIS after only a few meters (note that $\alpha_{\rm NCR}$ is omitted in cases where the RIS always performs better). However, for larger RIS sizes, if the BS--RIS/NCR distance is short, the RIS consistently outperforms the NCR. Conversely, when $N$ is small, the NCR requires only moderate amplification to outperform the RIS across all distances. The literature reports that repeater amplification gains of up to $90/2=45$\,dB are achievable \cite{bai2025repeater}, the amplification levels required in Fig.~\ref{fig1} are well below this limit, which justifies the use of the term \emph{moderate}.

Another important observation is that, as the distance $d_2$ increases, the channel gain $\beta_2$ decreases toward zero. Consequently, the required amplification factor $\alpha_{\rm NCR}$ converges to $N$, as predicted by \eqref{eq:condition-no-direct}. 

\begin{figure}[t!]
        \centering
	\begin{overpic}[width=0.98\columnwidth,trim=0.2cm 0cm 0.5cm 0.5cm,clip,tics=10]{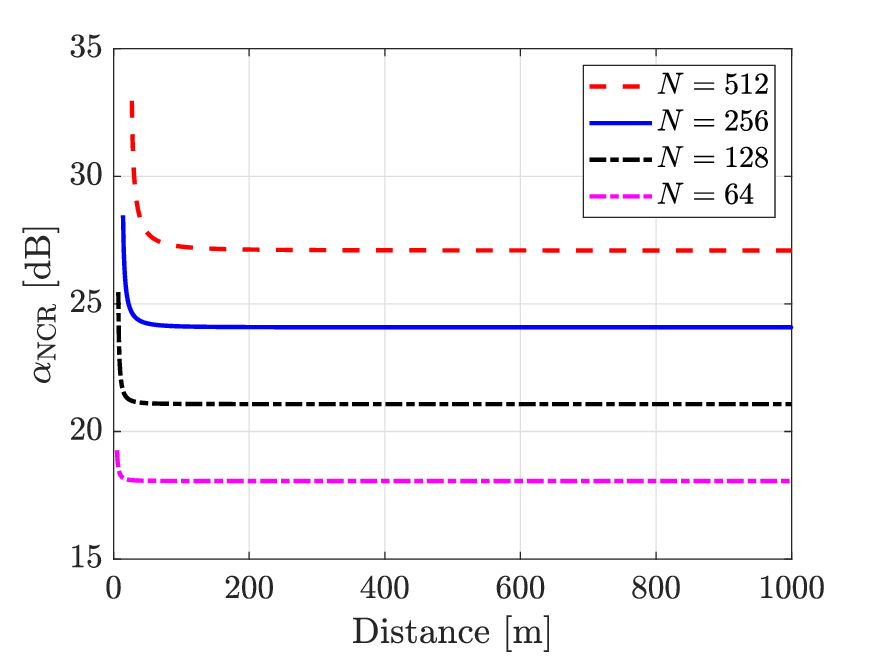}
\end{overpic} 
        \caption{The required amplification gain $\alpha_{\rm NCR}$ of the repeater to outperform the passive RIS in terms of the distance between the NCR/RIS and BS. }
        \label{fig1}
\end{figure}

In Fig.~\ref{fig2}, we illustrate the SNR achieved by an NCR and a passive RIS as a function of the distance $d_2$, for different values of the amplification factor $\alpha_{\rm NCR}$ and the number of RIS elements $N$. The system parameters are set to $P=20$\,dBm, $\beta_1=-87$\,dB (which corresponds to $d_1\approx 35.6$\,m in free-space path loss formula), and $\sigma^2=-117$\,dBm, corresponding to a bandwidth of $100$\,MHz and a noise figure of $7$\,dB, which yields $P\beta_1/\sigma^2=20$\,dB. The BS is equipped with $M=1024$ antennas.

When the NCR amplification is higher ($\alpha_{\rm NCR}=45$\,dB), the SNR closely approaches the ultimate limit $P\beta_1/\sigma^2$ in \eqref{eq:SNR-limit} over the entire range of considered distances. In contrast, when $\alpha_{\rm NCR}=40$\,dB, the gap to this ultimate limit increases with $d_2$. While a passive RIS can provide performance gains when placed very close to the BS, the NCR with moderate amplification outperforms a passive RIS with $512$ elements beyond a certain distance. As the distance increases, the performance gap between NCR and passive RIS becomes significant.

Importantly, under these system parameter assumptions, the transmit power of the NCR with $\alpha_{\rm NCR}=40$\,dB remains extremely low, amounting to only $-46.96$\,dBm. The transmit power is around $-36.96$\,dBm for $\alpha_{\rm NCR}=45$\,dB, which is still very low.

\begin{figure}[t!]
        \centering
	\begin{overpic}[width=0.98\columnwidth,trim=0.2cm 0cm 0.5cm 0.5cm,clip,tics=10]{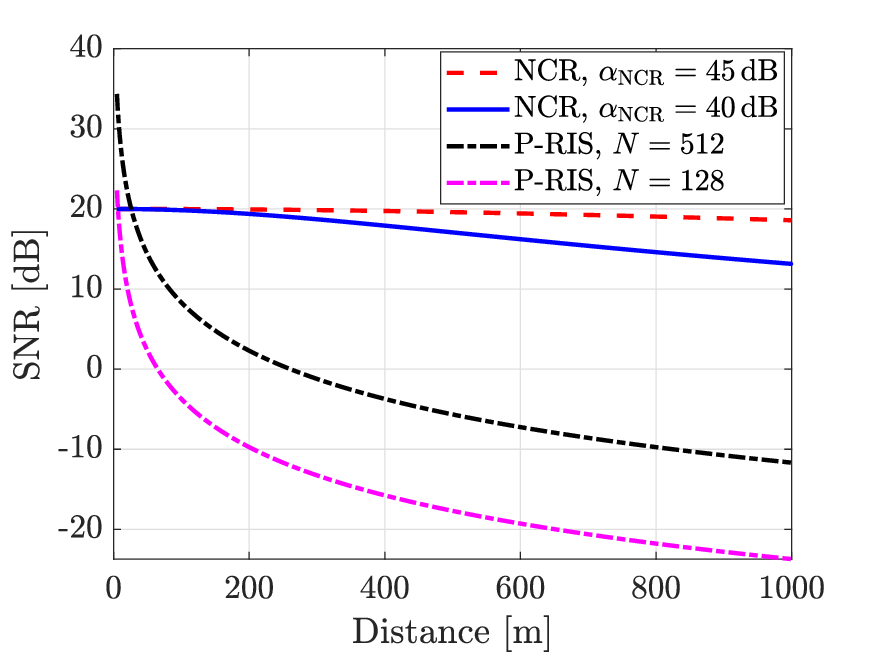}
\end{overpic} 
        \caption{The SNR obtained with passive RIS and NCR in terms of the distance between the NCR/RIS and BS. }
        \label{fig2}
\end{figure}

\subsection{Comparison between active RIS and NCR}

For the NCR to outperform the active RIS, $\alpha_{\rm NCR}$ must satisfy  
\begin{align}
    &\frac{\alpha_{\m NCR}^2 P M \beta_1 \beta_2}{\sigma^2 (1 + \alpha_{\m NCR}^2 \beta_2 M)} 
    \geq \frac{\alpha_{\rm A-RIS}^2P \beta_1 \beta_2 N^2 M}{\sigma^2\left(1+\alpha_{\rm A-RIS}^2\beta_2MN\right)}, \label{eq:condition-no-direct-active}
\end{align}
which is equivalent to 
\begin{align}
  &  \alpha_{\m NCR} \geq \sqrt{\frac{\frac{\alpha_{\m A-RIS}^2N^2}{1+\alpha_{\m A-RIS}^2\beta_2MN}}{1-\beta_2M\frac{\alpha_{\m A-RIS}^2N^2}{1+\alpha_{\m A-RIS}^2\beta_2MN}}} \\
    & \Longleftrightarrow \alpha_{\m NCR} \geq \frac{\alpha_{\rm A-RIS}N}{\sqrt{1-(N^2-N)\alpha_{\rm A-RIS}^2M\beta_2}} \label{eq:condition-no-direct-active2}
\end{align}
under the assumption that $1-(N^2-N)\alpha_{\rm A\text{-}RIS}^2 M \beta_2>0$. 
By contrast, if $(N^2-N)\alpha_{\rm A\text{-}RIS}^2 M \beta_2 \geq 1$, the active RIS always achieves a higher SNR than the NCR. 
Compared to the corresponding condition in the passive RIS case, namely $1-N^2 M \beta_2>0$, this assumption is more restrictive, implying that outperforming an active RIS becomes more challenging for the NCR.

Nevertheless, if $\beta_2$ is sufficiently small, the NCR can still outperform the active RIS. This behavior can be attributed to the fact that NCRs are expected to support higher amplification levels compared to active RIS architectures due to their centralized hardware design. Consequently, in certain operating regimes, this increased amplification capability may compensate for the noise amplification and enable NCRs to outperform active RISs.

For example, prior works report that the maximum amplification gain of a repeater can reach up to $45$\,dB, whereas the per-element amplification gain of an active RIS is typically more limited, e.g., $\alpha_{\rm A-RIS}=30$, corresponding to $14.77$\,dB in the circuit model considered in~\cite{gavriilidis2025active}. These values should, however, be interpreted as indicative rather than definitive, since both technologies are still under active development.

Interestingly, as $\beta_2 \to 0$, the required amplification factor for the NCR converges to $\alpha_{\rm A\text{-}RIS} N$, which results in an $N$-fold increase in radiated power from the NCR compared to the active RIS. 
This can be seen by comparing $\alpha_{\rm NCR}^2=\alpha_{\rm A\text{-}RIS}^2 N^2$ for the NCR with $\alpha_{\rm A\text{-}RIS}^2 N$ for the active RIS.  
However, given the extremely low transmit power levels required for the NCR in the considered regime, this increase does not constitute a practical bottleneck. In contrast, when the RIS employs an excessive number of elements, the power consumption associated with RIS
controlling and configuring becomes dominant and can reach tens of watts, as reported in \cite{wang2024reconfigurable}.

\subsubsection*{Numerical illustration}
To quantify the required NCR amplification to beat active RIS, we consider the previously defined simulation setup, where the wavelength is $\lambda = 0.02$\,m, corresponding to $f_c = 15$\,GHz, and adopt the free-space path loss model with $\beta_2 = \lambda^2 / (4\pi d_2)^2$, where $d_2$ denotes the propagation distance. We fix $M = 1024$ and vary $d_2$ in Fig.~\ref{fig3}, where we show the required NCR amplification for different values of $N$.  

Fig.~\ref{fig3} shows that, when the RIS is moderately sized and the BS is located sufficiently far away from the NCR/RIS, the NCR outperforms the RIS  with moderate values of amplification. On the other hand, when the number of RIS elements is high, e.g., $N=512$, it is only possible for the NCR to outperform active RIS when the propagation distance is very long. Similar to the passive RIS, as the distance $d_2$ increases, the channel gain $\beta_2$ decreases toward zero and the required amplification factor $\alpha_{\rm NCR}$ converges to $\alpha_{\rm A-RIS}N$, as predicted by \eqref{eq:condition-no-direct-active2}.

\begin{figure}[t!]
        \centering
	\begin{overpic}[width=0.98\columnwidth,trim=0.2cm 0cm 0.5cm 0.5cm,clip,tics=10]{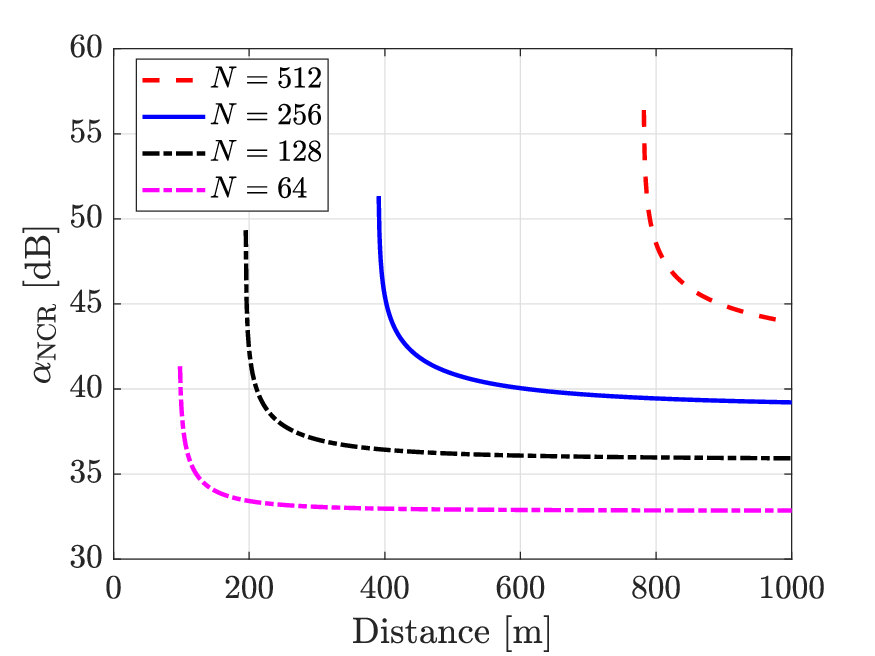}
\end{overpic} 
        \caption{The required amplification gain $\alpha_{\rm NCR}$ of the repeater to outperform the active RIS in terms of the distance between the repeater/RIS and BS. }
        \label{fig3}
\end{figure}

In Fig.~\ref{fig4}, we illustrate the SNR achieved by an NCR and an active RIS as a function of the distance $d_2$, for different values of the amplification factor $\alpha_{\rm NCR}$ and the number of RIS elements $N$. The system parameters are set to $P=20$\,dBm, $\beta_1=-87$\,dB, and $\sigma^2=-117$\,dBm, corresponding to a bandwidth of $100$\,MHz and a noise figure of $7$\,dB, which yields $P\beta_1/\sigma^2=20$\,dB. The BS is equipped with $M=1024$ antennas, and the amplification gain per element of active RIS is $\alpha_{\rm A-RIS}=30$.

Compared with Fig.~\ref{fig2}, higher SNR values are obtained with the active RIS. Nevertheless, as long as the active RIS does not have an excessive number of RIS elements, the NCR outperforms the RIS at large distances. However, when the RIS is positioned close to the BS, using an RIS is advantageous over a NCR. Here, one should note that the distance from the UE to the RIS/NCR stays the same in this simulation setup.

To make a comparison in which the UE-to-BS distance remains constant while the positions of the RIS and NCR vary, we next consider the following geometry.
Specifically, we consider a three-dimensional setup in which the BS, equipped with $M=1024$ antennas, is located at $(0,0,10)$\,m. The UE is positioned at $(1000,0,0)$\,m, while the RIS or NCR is placed at $(r_x,10,10)$\,m. We evaluate the channel gains as functions of $r_x$, which denotes the horizontal position of the RIS/NCR. The active RIS employs an amplification factor of $\alpha_{\rm A-RIS}=30$. For the NCR, both a maximum amplification constraint of $45$\,dB and a total transmit power constraint of $1$\,W are imposed.

In Fig.~\ref{fig5}, we illustrate the SNR achieved by the NCR, passive RIS, and active RIS as a function of their deployment location, while keeping the BS and UE positions fixed. Due to the double-cascaded path loss, the SNR attained with the passive RIS remains below $0$\,dB for all considered locations. In contrast, the active RIS provides substantially higher SNR values, especially when placed in close proximity to the UE. For the active RIS configuration with $N=512$ elements, the NCR outperforms the active RIS when it is positioned closer to the UE. On the other hand, when $N=128$, the NCR achieves markedly higher SNR than active RIS for most deployment locations. Since both the NCR and the active RIS amplify not only the desired signal but also the noise, their SNR-optimal placement is closer to the UE.

\begin{figure}[t!]
        \centering
	\begin{overpic}[width=0.98\columnwidth,trim=0.2cm 0cm 0.5cm 0.5cm,clip,tics=10]{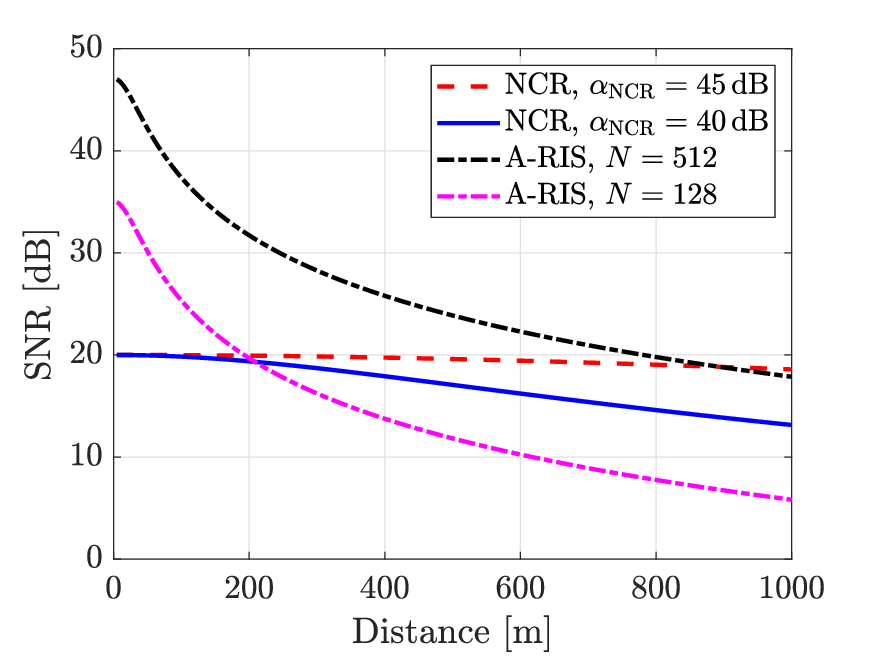}
\end{overpic} 
        \caption{The SNR obtained with active RIS and NCR in terms of the distance between the NCR/RIS and BS. }
        \label{fig4}
\end{figure}

\begin{figure}[t!]
        \centering
	\begin{overpic}[width=0.98\columnwidth,trim=0.2cm 0cm 0.5cm 0.5cm,clip,tics=10]{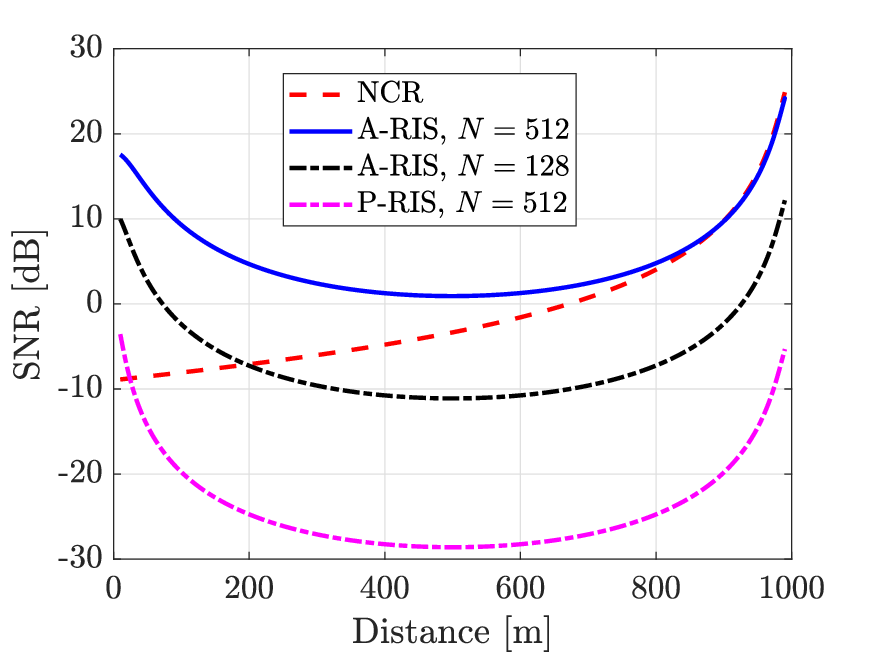}
\end{overpic} 
        \caption{The SNR obtained with passive/active RIS and NCR in terms of the horizontal position of the RIS/NCR. }
        \label{fig5}
\end{figure}

Up to this point, we have assumed that the BS is equipped with $M=1024$ antennas, which is consistent with the concept of gigantic MIMO operating in the upper midband \cite{bjornson2025enabling}. In Fig.~\ref{fig6}, we investigate the impact of the number of BS antennas $M$ on the performance gap between the NCR and passive/active RIS under the previously considered setup, with the UE now located at $(1000,0,0)$\,m and the RISs as well as the NCR positioned at $(800,10,10)$\,m.

As illustrated in Fig.~\ref{fig6}, the NCR consistently outperforms both passive and active RIS in terms of SNR when the number of BS antennas is less than 800. Notably, when the number of BS antennas is smaller, the relative performance gain provided by the NCR becomes more pronounced compared to both RIS-based solutions. On the other hand, active RIS with $N=512$ elements provides slightly larger SNR when having a gigantic number of BS antennas.

\begin{figure}[t!]
        \centering
	\begin{overpic}[width=0.98\columnwidth,trim=0.2cm 0cm 0.5cm 0.5cm,clip,tics=10]{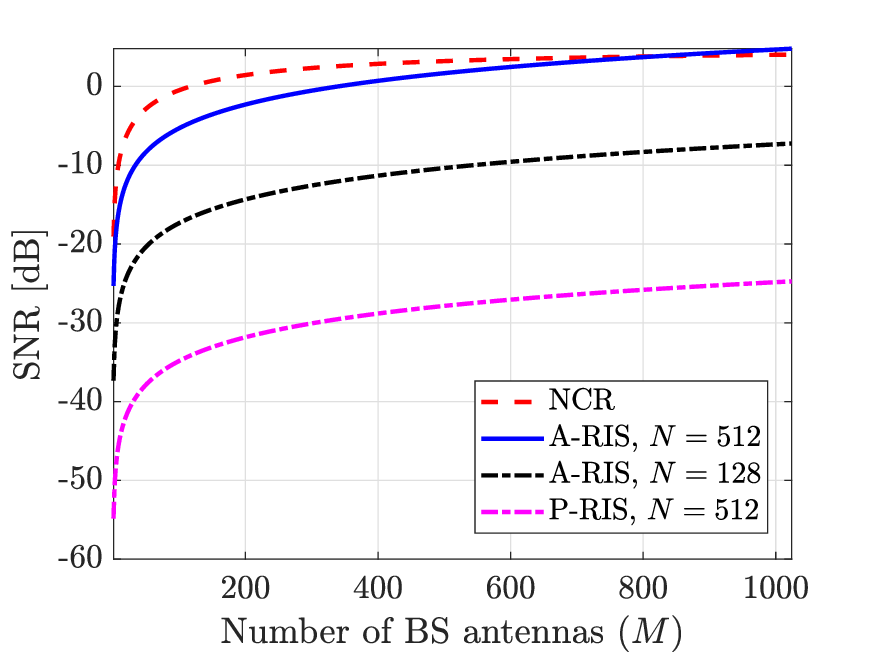}
\end{overpic} 
        \caption{The SNR obtained with passive/active RIS and NCR in terms of the number of BS antennas $M$. }
        \label{fig6}
\end{figure}

\section{Comparison When the Direct Path is Included} \label{sec4}

In this section, we will analyze the impact of a direct link between the BS and UE on the performance comparison of NCR and passive/active RIS. We let $\vect{h}_3$ denote the NLOS channel between the BS and the UE. Next, we will quantify the maximum SNR that can be achieved with a passive RIS, active RIS, and NCR.

\subsection{Passive RIS-assisted communication}
The overall RIS-assisted cascaded channel is 
\begin{align}
    \vect{h}= \vect{h}_3+\vect{H}_2\vect{\Psi}_{\rm P}\vect{h}_1. 
\end{align}
The optimal RIS phase-shift configuration is given by
\begin{align}
    \psi_{{\rm P}, n} = \exp\left(\imagunit(-\angle{[\vect{a}_1]_n}-\angle{[\vect{a}_{2,2}]_n}-\angle{\vect{h}_3^{\Htran}\vect{a}_{2,1}})\right),
\end{align}
which yields
\begin{align}
\left\lVert \vect{h} \right\rVert^2
&= \left\lVert \vect{h}_3 + \vect{H}_2 \vect{\Psi}_{\rm P} \vect{h}_1 \right\rVert^2 \nonumber \\
&= \left\lVert \vect{h}_3 \right\rVert^2 
+ \left\lVert \vect{H}_2 \vect{\Psi}_{\rm P} \vect{h}_1 \right\rVert^2 
+ 2 \Re\!\left( \vect{h}_3^{\Htran} \vect{H}_2 \vect{\Psi}_{\rm P} \vect{h}_1 \right) \nonumber \\
&= \left\lVert \vect{h}_3 \right\rVert^2 
+ \beta_1\beta_2 \left| \vect{a}_{2,2}^{\Ttran}\vect{\Psi}_{\rm P}\vect{a}_1 \right|^2 \left\lVert \vect{a}_{2,1} \right\rVert^2 \nonumber\\
&\quad + 2\sqrt{\beta_1\beta_2}\,\Re\!\left( \vect{h}_3^{\Htran}\vect{a}_{2,1}\vect{a}_{2,2}^{\Ttran}\vect{\Psi}_{\rm P}\vect{a}_1 \right) \nonumber \\
&= \left\lVert\vect{h}_3\right\rVert^2
+ \beta_1 \beta_2 N^2 M
+ 2\sqrt{\beta_1 \beta_2}N
\left|\vect{h}_3^{\Htran}\vect{a}_{2,1}\right|.
\end{align}
Thus, the maximum SNR in the passive RIS-assisted system becomes
\begin{align}
    \mathrm{SNR}_{\rm P-RIS} = \frac{P\left(\left\Vert\vect{h}_3\right\Vert^2+\beta_1\beta_2N^2M+2\sqrt{\beta_1\beta_2}N|\vect{h}_3^{\Htran}\vect{a}_{2,1}|\right)}{\sigma^2}.
\end{align}

\subsection{Active RIS-assisted communication}

 The received signal at the BS is given as 
\begin{align}
    \vect{y}_2 =  \sqrt{P}\vect{H}_2\vect{\Psi}_{\rm A} \vect{h}_1s+\sqrt{P}\vect{h}_3s+\vect{H}_2\vect{\Psi}_{\rm A}\vect{n}_1 + \vect{n}_2.
\end{align}
 The optimal receiver is the minimum mean-squared error (MMSE) receiver, which leads to the SNR in \eqref{eq:SNR-direct-activeRIS} at the top of the next page.

\begin{figure*}
\begin{align}
    \mathrm{SNR}_{\rm A-RIS} = P\left(\sqrt{\beta_1\beta_2}\vect{a}_{2,1}\vect{a}_{2,2}^{\Ttran}\vect{\Psi}_{\rm A} \vect{a}_1+\vect{h}_3\right)^{\Htran}\left(\beta_2\sigma^2\sum_{n=1}^N\left|\psi_{{\rm A},n}\right|^2\vect{a}_{2,1}\vect{a}_{2,1}^{\Htran}+\sigma^2\vect{I}_M\right)^{-1}\left(\sqrt{\beta_1\beta_2}\vect{a}_{2,1}\vect{a}_{2,2}^{\Ttran}\vect{\Psi}_{\rm A} \vect{a}_1+\vect{h}_3\right). \label{eq:SNR-direct-activeRIS}
\end{align}
\hrulefill
\end{figure*}

\subsubsection*{Simplification via Rank-One Update}
To simplify the SNR expression, we employ the rank-one update formula, which states that for an invertible matrix~$\mathbf{A}$ and a rank-one perturbation~$\mathbf{u}\mathbf{v}^{\Htran}$,
\begin{equation}
(\mathbf{A} + \mathbf{u} \mathbf{v}^{\Htran})^{-1} = \mathbf{A}^{-1} - \frac{\mathbf{A}^{-1} \mathbf{u} \mathbf{v}^{\Htran} \mathbf{A}^{-1}}{1 + \mathbf{v}^{\Htran} \mathbf{A}^{-1} \mathbf{u}}.
\end{equation}
Defining
\begin{align}
   & \mathbf{A} = \sigma^2 \mathbf{I}_M, \\
   & \mathbf{u} = \beta_2\sigma^2\sum_{n=1}^N\left|\psi_{{\rm A},n}\right|^2 \vect{a}_{2,1}, \\
   & \mathbf{v}^{\Htran} = \vect{a}_{2,1}^{\Htran},
\end{align}
and applying the formula to the inverse term in \eqref{eq:SNR-direct-activeRIS}, we obtain the expression in \eqref{eq:SNR-complicated-activeRIS} at the top of the next page.

\begin{figure*}
\begin{align}
    \mathrm{SNR}_{\rm A-RIS} &= \frac{PM\beta_1\beta_2\left|\vect{a}_{2,2}^{\Ttran}\vect{\Psi}_{\rm A}\vect{a}_1\right|^2-P\beta_2\sum_{n=1}^N\left|\psi_{{\rm A},n}\right|^2|\vect{h}_3^{\Htran}\vect{a}_{2,1}|^2+2P\sqrt{\beta_1\beta_2}\Re(\vect{h}_3^{\Htran}\vect{a}_{2,1}\vect{a}_{2,2}^{\Ttran}\vect{\Psi}_{\rm A}\vect{a}_1)}{\sigma^2(1+\sum_{n=1}^N\left|\psi_{{\rm A},n}\right|^2\beta_2M)}+\frac{P\Vert \vect{h}_3\Vert^2}{\sigma^2}. \label{eq:SNR-complicated-activeRIS}
\end{align}
\hrulefill
\end{figure*}

 The RIS phase-shift configuration that maximizes the SNR can easily be shown as
\begin{align}
    \psi_{{\rm A},n} =\vert \psi_{{\rm A},n}\vert\exp\left(\imagunit(-\angle{[\vect{a}_1]_n}-\angle{[\vect{a}_{2,2}]_n}-\angle{\vect{h}_3^{\Htran}\vect{a}_{2,1}})\right),
\end{align}
which yields the SNR in terms of RIS amplitude gains in \eqref{eq:SNR-complicated-activeRIS2} at the top of the next page.
\begin{figure*}
\begin{align}
    \mathrm{SNR}_{\rm A-RIS} &= \frac{PM\beta_1\beta_2\left(\sum_{n=1}^N\left|\psi_{A,n}\right|\right)^2-P\beta_2\sum_{n=1}^N\left|\psi_{{\rm A},n}\right|^2|\vect{h}_3^{\Htran}\vect{a}_{2,1}|^2+2P\sqrt{\beta_1\beta_2}\left\vert\vect{h}_3^{\Htran}\vect{a}_{2,1}\right\vert\sum_{n=1}^N|\psi_{{\rm A},n}|}{\sigma^2(1+\sum_{n=1}^N\left|\psi_{{\rm A},n}\right|^2\beta_2M)}+\frac{P\Vert \vect{h}_3\Vert^2}{\sigma^2}. \label{eq:SNR-complicated-activeRIS2}
\end{align}
\hrulefill
\end{figure*}

\begin{theorem}
The SNR in \eqref{eq:SNR-complicated-activeRIS2} is maximized when all RIS amplitude gains are identical, i.e., $|\psi_{{\rm A},n}| = \alpha_{\rm A\text{-}RIS}$ for all $n$, under the maximum amplitude constraint $\alpha_{\rm A-RIS}\leq \alpha_{\rm A-RIS, max}$. 

Under this configuration, the SNR can be expressed as
\begin{align}
\mathrm{SNR}_{\rm A\text{-}RIS}
= \frac{\mathcal{A}\alpha_{\rm A\text{-}RIS}^2+\mathcal{B}\alpha_{\rm A\text{-}RIS}}{\sigma^2+\mathcal{C}\alpha_{\rm A\text{-}RIS}^2}
+\frac{P\Vert \vect{h}_3\Vert^2}{\sigma^2},
\end{align}
where
\begin{align}
\mathcal{A}&= PM\beta_1\beta_2N^2-P\beta_2N|\vect{h}_3^{\Htran}\vect{a}_{2,1}|^2,\\
\mathcal{B}&=2PN\sqrt{\beta_1\beta_2}|\vect{h}_3^{\Htran}\vect{a}_{2,1}|,\\
\mathcal{C}&=\sigma^2\beta_2MN.
\end{align}

Moreover, the optimal amplitude is given by
\begin{align}
\alpha_{\rm A\text{-}RIS}
=
\begin{cases}
\alpha_{\rm A\text{-}RIS,max}, & \text{if } |\vect{h}_3^{\Htran}\vect{a}_{2,1}|=0,\\
\min\left(\alpha_{\rm A\text{-}RIS}^{\star}, \alpha_{\rm A\text{-}RIS,max}\right), & \text{otherwise},
\end{cases}
\end{align}
where
\begin{align}
\alpha_{\rm A\text{-}RIS}^{\star}
=\frac{\mathcal{A}\sigma^2+\sqrt{\mathcal{A}^2\sigma^4+\mathcal{B}^2\mathcal{C}\sigma^2}}{\mathcal{B}\mathcal{C}}.
\end{align}
\end{theorem}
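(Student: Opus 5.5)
The plan is to first reduce the $N$-dimensional amplitude optimization to a one-dimensional one, and then solve the scalar problem by elementary calculus. Write $x_n=|\psi_{{\rm A},n}|\in[0,\alpha_{\rm A\text{-}RIS,max}]$ and $g=|\vect{h}_3^{\Htran}\vect{a}_{2,1}|$. Define $S_1=\sum_{n=1}^N x_n$ and $S_2=\sum_{n=1}^N x_n^2$. Then the first term of \eqref{eq:SNR-complicated-activeRIS2} depends on the amplitudes only through $(S_1,S_2)$:
\begin{align}
f(S_1,S_2)=\frac{PM\beta_1\beta_2 S_1^2-P\beta_2 g^2 S_2+2P\sqrt{\beta_1\beta_2}\,g\,S_1}{\sigma^2(1+\beta_2 M S_2)}.
\end{align}
The key observation is that, for fixed $S_2$, the denominator is fixed. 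The numerator is then a polynomial in $S_1\ge 0$ with nonnegative coefficients on $S_1^2$ and $S_1$, so $f$ is nondecreasing in $S_1$.

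\textbf{Reduction to equal amplitudes.} Fix any feasible amplitude vector and its value of $S_2$. By Cauchy--Schwarz, $S_1\le\sqrt{N S_2}$, with equality if and only if all $x_n$ are equal, namely $x_n=\sqrt{S_2/N}$. I must check that this equal-amplitude vector is still feasible. Since the root-mean-square value $\sqrt{S_2/N}$ is at most $\max_n x_n\le\alpha_{\rm A\text{-}RIS,max}$, it is. Hence replacing any feasible vector by the equal-amplitude vector with the same $S_2$ keeps $S_2$ unchanged, weakly increases $S_1$, and therefore weakly increases the SNR. This proves that an optimum exists among configurations with $|\psi_{{\rm A},n}|=\alpha_{\rm A\text{-}RIS}$ for all $n$.

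Substituting $S_1=N\alpha_{\rm A\text{-}RIS}$ and $S_2=N\alpha_{\rm A\text{-}RIS}^2$ and expanding the denominator as $\sigma^2+\sigma^2\beta_2MN\alpha_{\rm A\text{-}RIS}^2$ gives the stated form with $\mathcal{A}$, $\mathcal{B}$, $\mathcal{C}$. This is a routine regrouping. I expect the reduction step above to be the main conceptual point. The trap is that the $-P\beta_2 g^2 S_2$ term has the wrong sign for a naive monotonicity argument in each $x_n$; fixing $S_2$ first is exactly what sidesteps this.

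\textbf{Scalar optimization.} Let $h(\alpha)=(\mathcal{A}\alpha^2+\mathcal{B}\alpha)/(\sigma^2+\mathcal{C}\alpha^2)$. Differentiating, the numerator of $h'(\alpha)$ simplifies, after the $\alpha^3$ terms cancel, to
\begin{align}
q(\alpha)=-\mathcal{B}\mathcal{C}\alpha^2+2\mathcal{A}\sigma^2\alpha+\mathcal{B}\sigma^2 .
\end{align}
I then split into two cases.
\begin{itemize}
\item If $g=0$, then $\mathcal{B}=0$ and $\mathcal{A}=PM\beta_1\beta_2N^2>0$. Thus $q(\alpha)=2\mathcal{A}\sigma^2\alpha\ge 0$, so $h$ is nondecreasing and the optimum is $\alpha_{\rm A\text{-}RIS,max}$.
\item If $g>0$, then $\mathcal{B}\mathcal{C}>0$, so $q$ is a downward-opening quadratic with $q(0)=\mathcal{B}\sigma^2>0$. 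It therefore has exactly one positive root, which is $\alpha^{\star}_{\rm A\text{-}RIS}$ from the quadratic formula, and this holds regardless of the sign of $\mathcal{A}$. Moreover, $q>0$ on $[0,\alpha^{\star}_{\rm A\text{-}RIS})$ and $q<0$ beyond it, so $h$ is unimodal. Its maximum over $[0,\alpha_{\rm A\text{-}RIS,max}]$ is therefore attained at $\min(\alpha^{\star}_{\rm A\text{-}RIS},\alpha_{\rm A\text{-}RIS,max})$.
\end{itemize}
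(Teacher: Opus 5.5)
Your proof is correct. Its overall architecture matches the paper's Appendix~A: collapse to equal amplitudes, then solve a one-dimensional calculus problem. The reduction step, however, is done the other way round.

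\textbf{Reduction step.} The paper fixes $S_1=\sum_n|\psi_{{\rm A},n}|$, equalizes the amplitudes, and uses strict convexity (Jensen) to conclude that $S_2=\sum_n|\psi_{{\rm A},n}|^2$ strictly decreases. It then asserts that lowering $S_2$ in both numerator and denominator raises the SNR. That claim is true, but it needs a check the paper does not write out. With $S_1$ fixed, the ratio is $(a-bS_2)/(\sigma^2(1+\beta_2MS_2))$ with $a,b\ge 0$. Its derivative in $S_2$ has numerator $-(b+\beta_2Ma)\le 0$. The verbal argument ``both penalty terms shrink'' would not suffice on its own, because the numerator can be negative (e.g.\ when $\mathcal{A}<0$ and $\alpha_{\rm A\text{-}RIS}$ is large).

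You instead fix $S_2$, which freezes the denominator. Monotonicity in $S_1$ then follows at once from the nonnegative coefficients, and Cauchy--Schwarz does the equalizing. You also check feasibility explicitly (the root-mean-square amplitude is at most the largest one), which the paper leaves implicit. Your ``weakly increases'' can be sharpened to ``strictly increases'' when the amplitudes are unequal, since the coefficient $PM\beta_1\beta_2$ of $S_1^2$ is positive. That recovers the paper's stronger claim that every optimizer has equal amplitudes.

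\textbf{Scalar step.} The paper uses uniqueness of the positive root plus a second-derivative test at $\alpha^\star_{\rm A\text{-}RIS}$. Your sign analysis of $q$ ($q(0)=\mathcal{B}\sigma^2>0$, opening downward) gives unimodality on $[0,\infty)$ directly. That is exactly what justifies the clipping $\min(\alpha^\star_{\rm A\text{-}RIS},\alpha_{\rm A\text{-}RIS,max})$.
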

\begin{proof}
    The proof is given in Appendix~A.
\end{proof}

The result shows that, although increasing the amplification improves the coherent signal gain, it also amplifies noise. When a direct link is present, excessive amplification can therefore be detrimental, which explains why the optimal gain is not always the maximum value.

In comparing the active RIS with NCR, we will use the optimal amplitude configuration derived above.

\subsection{NCR-assisted communication}

For the repeater-assisted system, the received signal at the BS is
\begin{align}
    y_2 &= \alpha_{\rm NCR} y_1 \vect{h}_2 +\sqrt{P}\vect{h}_3s +\vect{n}_2 \nonumber\\
        &= \alpha_{\rm NCR}\sqrt{P\beta_1\beta_2}\,\vect{a}_{2,1}e^{\imagunit \varphi_1}s \nonumber\\
        & \quad
        + \alpha_{\rm NCR}\sqrt{\beta_2}\,n_1\vect{a}_{2,1} 
        + \sqrt{P}\,\vect{h}_3s+\vect{n}_2,
\end{align}
where $n_1 \sim \CN(0,\sigma^2)$ and $\vect{n}_2 \sim \CN(\vect{0}, \sigma^2 \vect{I}_M)$.  
The optimal receiver is the MMSE receiver, which leads to the SNR in \eqref{eq:SNR-direct-repeater} at the top of the next page.

\begin{figure*}
\begin{align}
    \mathrm{SNR}_{\rm NCR} = P\left(\alpha_{\rm NCR}\sqrt{\beta_1\beta_2} \vect{a}_{2,1}e^{\imagunit \varphi_1}+\vect{h}_3\right)^{\Htran}\left(\alpha_{\rm NCR}^2\beta_2\sigma^2\vect{a}_{2,1}\vect{a}_{2,1}^{\Htran}+\sigma^2\vect{I}_M\right)^{-1}\left(\alpha_{\rm NCR}\sqrt{\beta_1\beta_2} \vect{a}_{2,1}e^{\imagunit \varphi_1}+\vect{h}_3\right). \label{eq:SNR-direct-repeater}
\end{align}
\hrulefill
\end{figure*}

Defining
\begin{align}
   & \mathbf{A} = \sigma^2 \mathbf{I}_M, \\
   & \mathbf{u} = \alpha_{\rm NCR}^2 \beta_2 \sigma^2 \vect{a}_{2,1}, \\
   & \mathbf{v}^{\Htran} = \vect{a}_{2,1}^{\Htran},
\end{align}
and applying the rank one update formula to the inverse term in \eqref{eq:SNR-direct-repeater}, we obtain the expression in \eqref{eq:SNR-complicated}.

\begin{figure*}
\begin{align}
    \mathrm{SNR}_{\rm NCR} &= \frac{P\alpha_{\rm NCR}^2M\beta_1\beta_2-P\alpha_{\rm NCR}^2\beta_2|\vect{h}_3^{\Htran}\vect{a}_{2,1}|^2+2P\alpha_{\rm NCR}\sqrt{\beta_1\beta_2}\Re(\vect{h}_3^{\Htran}\vect{a}_{2,1}e^{\imagunit \varphi_1})}{\sigma^2(1+\alpha_{\rm NCR}^2\beta_2M)}+\frac{P\Vert \vect{h}_3\Vert^2}{\sigma^2}. \label{eq:SNR-complicated}
\end{align}
\hrulefill
\end{figure*}

To find the optimal $\alpha_{\rm NCR}$, we start by defining 
\begin{align}
  &  \mathcal{A}= PM\beta_1\beta_2-P\beta_2|\vect{h}_3^{\Htran}\vect{a}_{2,1}|^2,\\
  &\mathcal{B} = 2P\sqrt{\beta_1\beta_2}\Re(\vect{h}_3^{\Htran}\vect{a}_{2,1}e^{\imagunit \varphi_1}), \\
  &\mathcal{C}=\sigma^2\beta_2M,
\end{align}
the SNR can be written in terms of $\alpha_{\rm NCR}$ as
\begin{align}
    \mathrm{SNR}_{\rm NCR}=\underbrace{\frac{\mathcal{A}\alpha_{\rm NCR}^2+\mathcal{B}\alpha_{\rm NCR}}{\sigma^2+\mathcal{C}\alpha_{\rm NCR}^2}}_{\triangleq f(\alpha_{\rm NCR})}+\frac{P\Vert \vect{h}_3\Vert^2}{\sigma^2}.
\end{align}
Hence, to maximize the SNR, we need to maximize the function $ f(\alpha_{\rm NCR})$ under the maximum amplitude constraint $\alpha_{\rm NCR}\leq \alpha_{\rm NCR, max}$. There are three cases to be evaluated:

\emph{Case 1:} $\vect{h}_3^{\Htran}\vect{a}_{2,1}=0$. In this case, we have $\mathcal{B}=0$ and $\mathcal{A}>0$ and the function $ f(\alpha_{\rm NCR})$ becomes monotonically increasing with $\alpha_{\rm NCR}$. Hence, the optimal amplification gain is $\alpha_{\rm NCR}=\alpha_{\rm NCR, max}$.

\emph{Case 2:} $\Re(\vect{h}_3^{\Htran}\vect{a}_{2,1}e^{\imagunit\psi_1})<0$. In this case, by equating the first derivative of $ f(\alpha_{\rm NCR})$ to zero and by the second derivative check, we obtain a minimizer. Hence, the solution is one of the boundary points, i.e., $0$ or $\alpha_{\rm NCR,max}$. By checking the objective value, the solution in this case is given as

\begin{align}
    \alpha_{\rm NCR}=\begin{cases} \alpha_{\rm NCR,max} & \text{if }\mathcal{A}\alpha_{\rm NCR,max}^2+\mathcal{B}\alpha_{\rm NCR,max}>0, \\
    0, & \text{otherwise}.\end{cases}
\end{align}
 
\emph{Case 3:} $\Re(\vect{h}_3^{\Htran}\vect{a}_{2,1}e^{\imagunit\psi_1})>0$. In this case, we have $\mathcal{B}>0$ and the derivation is the same as the one in the active RIS case. Equating the first derivative to zero, we obtain the optimal solution as
\begin{align}
    \alpha_{\rm NCR} = \min\left(\alpha_{\rm NCR}^{\star}, \alpha_{\rm NCR,max} \right),
\end{align}
where
\begin{align}
    \alpha_{\rm NCR}^{\star}=\frac{\mathcal{A}\sigma^2+\sqrt{\mathcal{A}^2\sigma^4+\mathcal{B}^2\mathcal{C}\sigma^2}}{\mathcal{B}\mathcal{C}}.
\end{align}

\begin{remark} While the overall derivation shares similarities with the active RIS case, the NCR optimization problem is inherently different due to the presence of the real-part term. This term couples amplitude and phase effects and prevents a direct reduction to a purely amplitude-dependent formulation. Consequently, the structure of the optimal solution differs and requires a separate case-by-case analysis. \end{remark}

\subsection{Comparison between passive RIS and NCR}
For the repeater to outperform the passive RIS, the SNR of the repeater-assisted link must exceed that of the RIS-assisted link, which is stated as a condition in \eqref{eq:condition} at the top of the next page.
\begin{figure*}
\begin{align}
   & \frac{P\alpha_{\rm NCR}^2M\beta_1\beta_2-P\alpha_{\rm NCR}^2\beta_2|\vect{h}_3^{\Htran}\vect{a}_{2,1}|^2+2P\alpha_{\rm NCR}\sqrt{\beta_1\beta_2}\Re(\vect{h}_3^{\Htran}\vect{a}_{2,1}e^{\imagunit \varphi_1})}{\sigma^2(1+\alpha_{\rm NCR}^2\beta_2M)}+\frac{P\Vert \vect{h}_3\Vert^2}{\sigma^2} \nonumber\\
    &\geq  \frac{P\left(\left\Vert\vect{h}_3\right\Vert^2+\beta_1\beta_2N^2M+2\sqrt{\beta_1\beta_2}N|\vect{h}_3^{\Htran}\vect{a}_{2,1}|\right)}{\sigma^2}. \label{eq:condition}
\end{align}
\hrulefill
\end{figure*}
Canceling $P\Vert \vect{h}_3\Vert^2/\sigma^2$ and multiplying by $\sigma^2/P$, we obtain
\begin{align}
f(\alpha_{\rm NCR},\vect{h}_3^{\Htran}\vect{a}_{2,1}e^{\imagunit \varphi_1})\geq g(N,\vect{h}_3^{\Htran}\vect{a}_{2,1}e^{\imagunit \varphi_1}), \label{eq:constraint-direct}
\end{align}
where
\begin{align}
&f(\alpha_{\rm NCR},x) \nonumber\\
&= \frac{\alpha_{\rm NCR}^2M\beta_1\beta_2-\alpha_{\rm NCR}^2\beta_2|x|^2+2\alpha_{\rm NCR}\sqrt{\beta_1\beta_2}\Re(x)}{1+\alpha_{\rm NCR}^2\beta_2M}, \label{eq:function-f} \\
&g(N,x) = \beta_1\beta_2 N^2M + 2\sqrt{\beta_1\beta_2}N|x|,
\end{align}
with $x\triangleq \vect{h}_3^{\Htran}\vect{a}_{2,1}e^{\imagunit \varphi_1}$.  

This condition reduces to a quadratic inequality
\begin{align}
\mathcal{A}\alpha_{\rm NCR}^2+\mathcal{B}\alpha_{\rm NCR}+\mathcal{C}\geq 0,
\end{align}
where
\begin{align}
\mathcal{A}&= M\beta_1\beta_2-|x|^2\beta_2-\beta_2Mg(N,x), \label{eq:A} \\
\mathcal{B} &=2\Re(x)\sqrt{\beta_1\beta_2}, \label{eq:B} \\
\mathcal{C} &=-g(N,x). \label{eq:C}
\end{align}
Since $\mathcal{C}<0$, three cases arise:  

\emph{Case 1:} $\mathcal{A}>0$.  
If $\alpha_{\rm NCR}> \frac{-\mathcal{B}+\sqrt{\mathcal{B}^2-4\mathcal{A}\mathcal{C}}}{2\mathcal{A}}$, the repeater outperforms the passive RIS.  

\emph{Case 2:} $\mathcal{A}=0$.  
If $\Re(x)\leq 0$, the repeater cannot beat the RIS regardless of the amplification gain. If $\Re(x)>0$ and $\alpha_{\rm NCR} > -\mathcal{C}/\mathcal{B}$, the repeater is superior.  

\emph{Case 3:} $\mathcal{A}<0$.  
If $\Re(x)\leq 0$, the repeater cannot outperform the RIS. If $\Re(x)>0$ and $\mathcal{B}^2-4\mathcal{A}\mathcal{C}\geq 0$, then the repeater outperforms the passive RIS whenever
\begin{align}
\alpha_{\rm NCR} \in \left( \frac{-\mathcal{B}+\sqrt{\mathcal{B}^2-4\mathcal{A}\mathcal{C}}}{2\mathcal{A}},\frac{-\mathcal{B}-\sqrt{\mathcal{B}^2-4\mathcal{A}\mathcal{C}}}{2\mathcal{A}}\right).
\end{align}

\subsubsection*{Numerical Illustration}
To illustrate, we consider free-space path loss with $f_c=15$\,GHz and $N=512$ RIS elements for the BS--RIS/NCR and RIS/NCR--UE channels. The BS--UE distance is $d_3=100$\,m, while the BS--RIS/NCR and RIS/NCR--UE distances are $d_2=90$\,m and $d_1=20$\,m, respectively.  The path loss for the direct channel, denoted by $\beta_3$, is given in decibel scale according to Urban Microcell Street Canyon model as $-32.4-20\log_{10}(15)-31.9\log_{10}(d_3/(1\,\text{m}))$ for $f_c=15$\,GHz \cite[Table 7.4.1-1]{3GPP5G}. The number of BS antennas is $M=1024$.
In Fig.~\ref{fig7}, the $x$-axis represents the angular mismatch between the direct path $\vect{h}_3$ and the cascaded path $\vect{a}_{2,1}e^{\imagunit\varphi_1}$.  

As the figure illustrates, when $x=\vect{h}_3^{\Htran}\vect{a}_{2,1}e^{\imagunit \varphi_1}$ is small in magnitude, the repeater can outperform the RIS with a small amplification. However, as $|x|$ grows, more amplification is needed at the NCR to outperform the passive RIS. Moreover, as the angular mismatch increases, $\mathcal{B}$ in \eqref{eq:B} eventually becomes negative and making the required $\alpha_{\rm NCR}$ higher as the previous analysis demonstrated.

\begin{figure}[t!]
        \centering
	\begin{overpic}[width=0.98\columnwidth,trim=0.2cm 0cm 0.5cm 0.5cm,clip,tics=10]{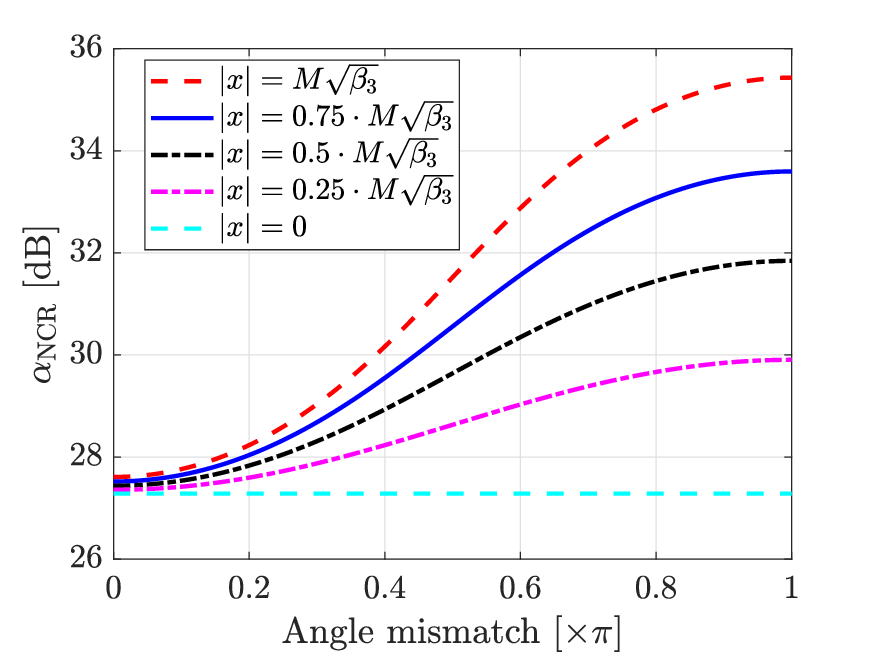}
\end{overpic} 
        \caption{The required amplification gain $\alpha_{\rm NCR}$ of the repeater to outperform the passive RIS in terms of the angle mismatch between the direct and cascaded path. }
        \label{fig7}
\end{figure}

 \subsection{Comparison between active RIS and NCR}
For the repeater to outperform the active RIS, the SNR of the repeater-assisted link must exceed that of the RIS-assisted link, i.e., as given in the condition in \eqref{eq:condition2} at the top of the next page.
\begin{figure*}
\begin{align}
   & \frac{P\alpha_{\rm NCR}^2M\beta_1\beta_2-P\alpha_{\rm NCR}^2\beta_2|\vect{h}_3^{\Htran}\vect{a}_{2,1}|^2+2P\alpha_{\rm NCR}\sqrt{\beta_1\beta_2}\Re(\vect{h}_3^{\Htran}\vect{a}_{2,1}e^{\imagunit \varphi_1})}{\sigma^2(1+\alpha_{\rm NCR}^2\beta_2M)}+\frac{P\Vert \vect{h}_3\Vert^2}{\sigma^2} \nonumber\\
    &\geq  \frac{P\alpha_{\rm A-RIS}^2M\beta_1\beta_2N^2-P\alpha_{\rm A-RIS}^2\beta_2N|\vect{h}_3^{\Htran}\vect{a}_{2,1}|^2+2P\alpha_{\rm A-RIS}N\sqrt{\beta_1\beta_2}\left\vert\vect{h}_3^{\Htran}\vect{a}_{2,1}\right\vert}{\sigma^2(1+\alpha_{\rm A-RIS}^2\beta_2MN)}+\frac{P\Vert \vect{h}_3\Vert^2}{\sigma^2}. \label{eq:condition2}
\end{align}
\hrulefill
\end{figure*}

Canceling $\frac{P\Vert \vect{h}_3\Vert^2}{\sigma^2}$ and multiplying by $\sigma^2/P$, we obtain
\begin{align}
f(\alpha_{\rm NCR},\vect{h}_3^{\Htran}\vect{a}_{2,1}e^{\imagunit \varphi_1})\geq g(N,\alpha_{\rm A-RIS},\vect{h}_3^{\Htran}\vect{a}_{2,1}e^{\imagunit \varphi_1}), \label{eq:constraint-direct}
\end{align}
where $f(\alpha_{\rm NCR},x)$ is given in \eqref{eq:function-f} and $g(N,\alpha_{\rm A-RIS},x)$ is given in \eqref{eq:function-g} at the top of the next page
\begin{figure*}
\begin{align}
&g(N,\alpha_{\rm A-RIS},x) = \frac{\alpha_{\rm A-RIS}^2M\beta_1\beta_2N^2-\alpha_{\rm A-RIS}^2\beta_2N|x|^2+2\alpha_{\rm A-RIS}N\sqrt{\beta_1\beta_2}\left\vert x \right\vert}{1+\alpha_{\rm A-RIS}^2\beta_2MN}. \label{eq:function-g}
\end{align}
\hrulefill
\end{figure*}
with $x\triangleq \vect{h}_3^{\Htran}\vect{a}_{2,1}e^{\imagunit \varphi_1}$.  

Similar to the comparison between the passive RIS, this condition reduces to a quadratic inequality 
\begin{align}
\mathcal{A}\alpha_{\rm NCR}^2+\mathcal{B}\alpha_{\rm NCR}+\mathcal{C}\geq 0,
\end{align}
where
\begin{align}
\mathcal{A}&= M\beta_1\beta_2-|x|^2\beta_2-\beta_2Mg(N,\alpha_{\m A-RIS},x), \label{eq:A2} \\
\mathcal{B} &=2\Re(x)\sqrt{\beta_1\beta_2}, \label{eq:B2} \\
\mathcal{C} &=-g(N,\alpha_{\rm A-RIS},x). \label{eq:C2}
\end{align}
Since $\mathcal{C}<0$, three cases arise:  

\emph{Case 1:} $\mathcal{A}>0$.  
If $\alpha_{\rm NCR}> \frac{-\mathcal{B}+\sqrt{\mathcal{B}^2-4\mathcal{A}\mathcal{C}}}{2\mathcal{A}}$, the repeater outperforms the active RIS.  

\emph{Case 2:} $\mathcal{A}=0$.  
If $\Re(x)\leq 0$, the repeater cannot beat the RIS. If $\Re(x)>0$ and $\alpha_{\rm NCR} > -\mathcal{C}/\mathcal{B}$, the repeater is superior.  

\emph{Case 3:} $\mathcal{A}<0$.  
If $\Re(x)\leq 0$, the repeater cannot outperform the RIS. If $\Re(x)>0$ and $\mathcal{B}^2-4\mathcal{A}\mathcal{C}\geq 0$, then the repeater outperforms the active RIS whenever
\begin{align}
\alpha_{\rm NCR} \in \left( \frac{-\mathcal{B}+\sqrt{\mathcal{B}^2-4\mathcal{A}\mathcal{C}}}{2\mathcal{A}},\frac{-\mathcal{B}-\sqrt{\mathcal{B}^2-4\mathcal{A}\mathcal{C}}}{2\mathcal{A}}\right).
\end{align}

\subsubsection*{Numerical Illustration}
We consider the same setup as in Fig.~\ref{fig7} and repeat the experiment using an active RIS with $N=512$ elements and the optimal amplitude configuration, where the maximum per-element amplification gain is $\alpha_{\rm A\text{-}RIS,max}=30$. 
When the BS--UE distance is set to $d_3=100$\,m, and the BS--RIS/NCR and RIS/NCR--UE distances are $d_2=90$\,m and $d_1=20$\,m, respectively, the repeater cannot outperform the active RIS for any amplification gain at such short distances. 
As observed in the previous sections, the main performance advantage of the NCR manifests itself in large-distance scenarios. 

Therefore, we repeat the same experiment by setting $d_1=1000$\,m, $d_2=900$\,m, and $d_3=200$\,m. 
Fig.~\ref{fig8} illustrates the required amplification gain for this setup, showing that significantly higher amplification levels are needed to outperform the active RIS. In particular, when the angular mismatch is large, extremely high amplification gains may be required, exceeding the practically reported limit of $90/2=45$\,dB \cite{bai2025repeater}.

\begin{figure}[t!]
        \centering
	\begin{overpic}[width=0.98\columnwidth,trim=0.2cm 0cm 0.5cm 0.5cm,clip,tics=10]{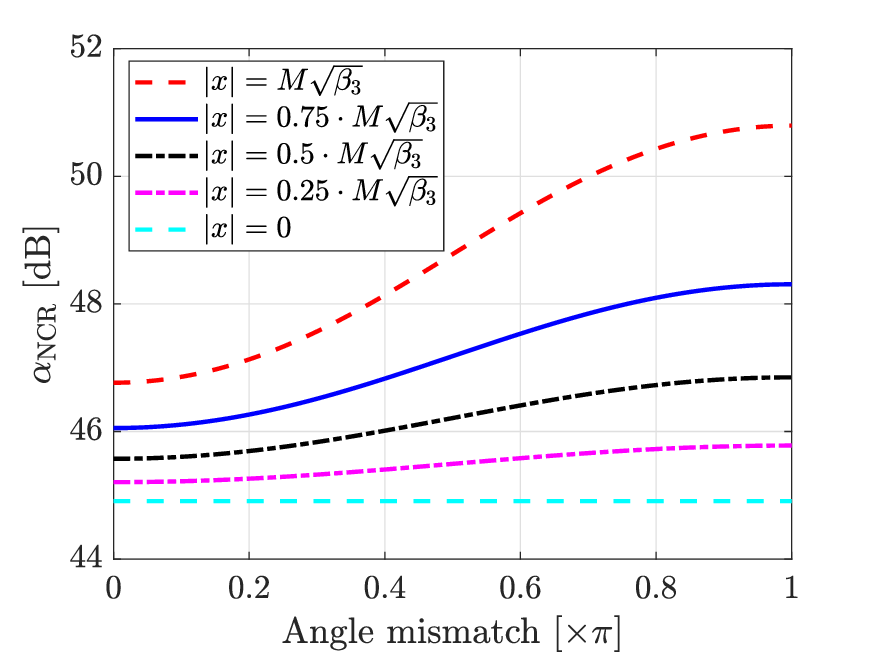}
\end{overpic} 
        \caption{The required amplification gain $\alpha_{\rm NCR}$ of the repeater to outperform the active RIS in terms of the angle mismatch between the direct and cascaded path. }
        \label{fig8}
\end{figure}

\section{Wideband Case}\label{sec5}

In the wideband case, all subcarriers are affected by the same amplification of NCR/active RIS or the same RIS phase-shifts. We consider a frequency-selective channel model where the direct and cascaded paths have different delay characteristics. 

The cascaded UE–RIS–BS (or UE–NCR–BS) link is assumed to be dominated by a LOS component and is therefore modeled as a single-tap channel. In contrast, the direct UE–BS link is modeled as a frequency-selective channel with multiple taps, capturing NLOS propagation effects.

This difference in delay spread leads to frequency-dependent variations across subcarriers, even if the cascaded channel itself is LOS:
\begin{align}
x=\vect{h}_3^{\Htran}\vect{a}_{2,1}e^{\imagunit \varphi_1}.
\end{align}
The $\vect{h}_3$ and $\varphi_1$ will vary across the subcarriers. Hence, even if the RIS phase-shifts are optimized for one subcarrier, phase misalignments inevitably occur on the other subcarriers. To account for frequency selectivity, we model the direct channel $\vect{h}_3$ as varying across subcarriers and consider the average SNR with respect to its distribution. Specifically, we assume that $\vect{h}_3 \sim \mathcal{CN}(\vect{0}, \vect{R}_3)$, where $\vect{R}_3$ is the spatial correlation matrix. Then, we have $\mathbb{E}\{\Vert\vect{h}_3\Vert^2\}=\tr(\vect{R}_3)=M\beta_3$, where $\beta_3$ is the corresponding path loss and shadowing-based large-scale channel fading coefficient.

\subsection{Passive RIS-assisted communication}
The average SNR obtained with the passive RIS is given by
\begin{align}
&\overline{\mathrm{SNR}}_{\rm P-RIS} \nonumber\\
&=
\frac{P}{\sigma^2}\Big(
\mathbb{E}\{\|\vect{h}_3\|^2\}
+\beta_1\beta_2N^2M
+2\sqrt{\beta_1\beta_2}N\,\mathbb{E}\big\{\Re(\vect{h}_3^{\Htran}\vect{a}_{2,1}e^{\imagunit\varphi_1})\big\}
\Big).
\end{align}
Since $\vect{h}_3 \sim \mathcal{CN}(\vect{0},\vect{R}_3)$ has zero mean, we obtain
\begin{align}
\mathbb{E}\big\{\Re(\vect{h}_3^{\Htran}\vect{a}_{2,1}e^{\imagunit\varphi_1})\big\}=0,
\end{align}
which leads to
\begin{align}
\overline{\mathrm{SNR}}_{\rm P-RIS}
=
\frac{P\left(\beta_3M+\beta_1\beta_2N^2M\right)}{\sigma^2}.
\end{align}

\subsection{Active RIS-assisted communication}
Similarly, averaging over random realizations of NLOS direct link at different subcarriers, the active RIS-assisted system yields
\begin{align}
   \overline{\mathrm{SNR}}_{\rm A-RIS}&=\frac{P\alpha_{\rm A-RIS}^2M\beta_1\beta_2N^2-P\alpha_{\rm A-RIS}^2\beta_2N\vect{a}_{2,1}^{\Htran}\vect{R}_3\vect{a}_{2,1}}{\sigma^2(1+\alpha_{\rm A-RIS}^2\beta_2MN)} \nonumber\\
   &\quad +\frac{P\beta_3M}{\sigma^2}.
\end{align}

\subsection{Repeater-assisted communication}
The average SNR for the repeater-assisted system is given as
\begin{align}
   \overline{\mathrm{SNR}}_{\rm NCR}&= \frac{P\alpha_{\rm NCR}^2M\beta_1\beta_2-P\alpha_{\rm NCR}^2\beta_2\vect{a}_{2,1}^{\Htran}\vect{R}_3\vect{a}_{2,1}}{\sigma^2(1+\alpha_{\rm NCR}^2\beta_2M)} \nonumber\\
   &\quad+\frac{P\beta_3M}{\sigma^2}.
\end{align}
Following steps similar to the narrowband case, one can derive analogous conditions for the repeater to outperform the passive/active RIS.  

An interesting operating regime arises when $\vect{a}_{2,1}$ lies in the null space of $\vect{R}_3$, i.e.,
$\vect{a}_{2,1}^{\Htran}\vect{R}_3\vect{a}_{2,1}=0$.
In this case, the conditions under which the NCR outperforms the passive or active RIS reduce to those obtained in the absence of a direct link, and coincide with the expressions derived in Section~\ref{sec3}.

In Fig.~\ref{fig9}, we consider a three-dimensional deployment scenario similar to that in Fig.~\ref{fig5}. The BS, equipped with $M=1024$ antennas, is located at $(0,0,10)$\,m, while the UE is positioned at $(1000,0,0)$\,m. The RIS or repeater is placed at $(r_x,10,10)$\,m, where $r_x$ denotes its horizontal location. We evaluate the channel gains as functions of $r_x$. The active RIS employs a fixed amplification factor of $\alpha_{\rm A\text{-}RIS}=30$. For the repeater, both a maximum amplification constraint of $45$\,dB and a total transmit power constraint of $1$\,W are imposed. The spatial correlation matrix is modeled as the superposition of Dirac delta impulses corresponding to four scatterers located in the vicinity of the UE at $(1000,5,0)$, $(1000,-5,0)$, $(990,5,0)$, and $(990,-5,0)$\,m.

Fig.~\ref{fig9} illustrates the average SNR achieved by the NCR, passive RIS, and active RIS as a function of their deployment location, while the BS and UE positions remain fixed. The \emph{Direct} case serves as a baseline and represents the average SNR obtained when only the direct BS--UE link is present. Owing to the large separation between the BS and UE, this SNR is very low. Employing a passive RIS with $N=512$ elements significantly improves the average SNR, particularly when the RIS is deployed close to either the BS or the UE. Nevertheless, due to the double-cascaded path loss, the SNR achieved with the passive RIS remains below $0$\,dB for all considered locations.

In contrast, the active RIS attains substantially higher SNR values, especially when placed in close proximity to the UE. For the active RIS configuration with $N=512$ elements, the NCR slightly outperforms the active RIS when positioned near the UE. On the other hand, when $N=128$, the NCR achieves markedly higher SNR than the active RIS over most deployment locations. Overall, the trends observed in Fig.~\ref{fig9} closely resemble those in Fig.~\ref{fig5}, and thus the same conclusions can be drawn regarding the average SNR behavior in the wideband case.

Until now, for the active RIS, we have limited the maximum amplification gain based on the circuit parameters reported in~\cite{gavriilidis2025active}. While some prior works consider higher amplification levels, there is currently no clear consensus on the practically achievable range.

To further investigate the performance under a fair power comparison, we repeat the previous experiment by constraining both the NCR and the active RIS to have the same total radiated power, set to $1$\,W. As shown in Fig.~\ref{fig10}, the active RIS with either $N=128$ or $N=512$ elements significantly outperforms the NCR. This gain stems from the coherent beamforming capability of the active RIS, enabled by phase shifts that are constructively aligned with the channel. However, the relative performance strongly depends on the achievable amplification levels, which are ultimately determined by the underlying hardware technology. Hence, which architecture provides superior performance depends on the practically realizable amplification capabilities.

\begin{figure}[t!]
        \centering
	\begin{overpic}[width=0.98\columnwidth,trim=0.2cm 0cm 0.5cm 0.5cm,clip,tics=10]{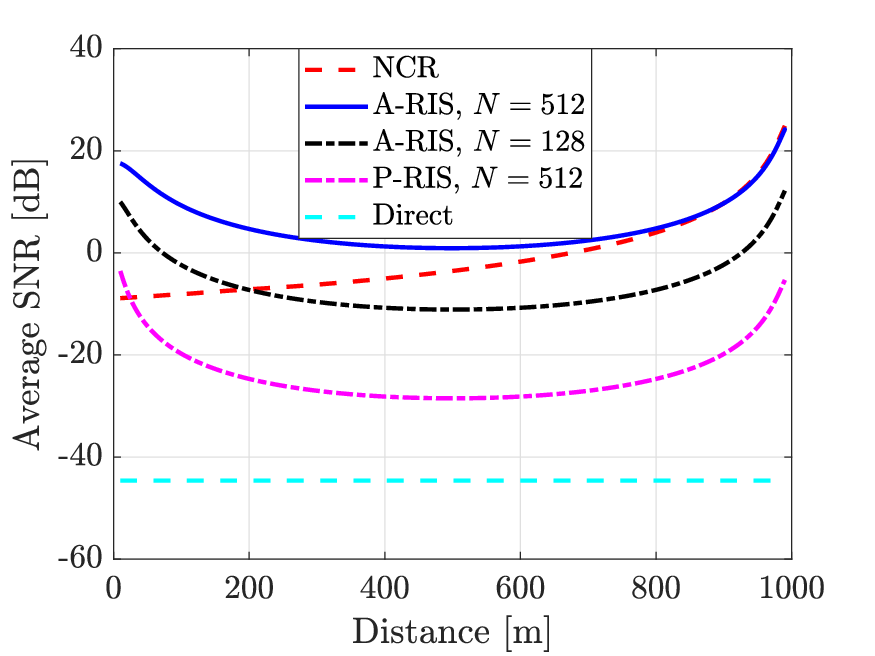}
\end{overpic} 
        \caption{The average SNR obtained with passive/active RIS and NCR in terms of the horizontal position of the RIS/NCR. The average SNR without any repeater or RIS is also shown as a baseline. }
        \label{fig9}
\end{figure}

\begin{figure}[t!]
        \centering
	\begin{overpic}[width=0.98\columnwidth,trim=0.2cm 0cm 0.5cm 0.5cm,clip,tics=10]{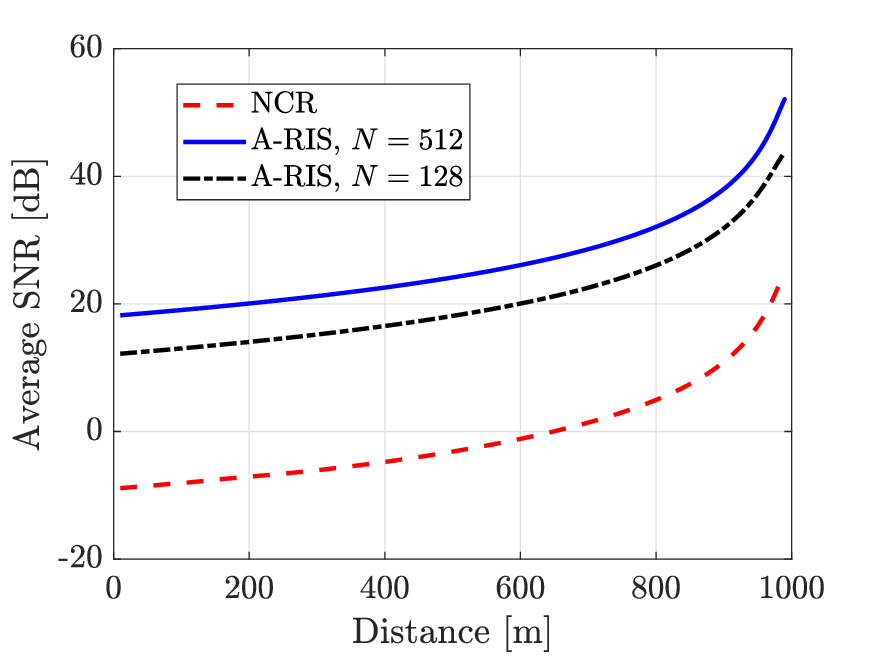}
\end{overpic} 
        \caption{The average SNR obtained with active RIS and NCR in terms of the horizontal position of the active RIS/NCR. The total radiated power from active RIS/NCR is $1$\,W. }
        \label{fig10}
\end{figure}

\section{Conclusions} \label{sec6}

This paper presented a comprehensive comparison between RISs and NCRs in terms of achievable SNR. By deriving analytical SNR expressions, we identified explicit conditions under which an NCR can outperform passive and active RISs. 

In narrowband systems without a direct BS--UE link, the condition for an NCR to outperform an RIS reduces to a required amplification level (as an increasing function of RIS elements). When a direct path is present, the interaction between the direct and cascaded channels introduces more intricate performance conditions, where constructive or destructive interference plays a decisive role. In wideband systems, frequency-dependent phase variations degrade the coherent combining gain of RISs, making NCRs relatively more competitive. Notably, when the direct and BS--repeater channels are orthogonal, the performance condition for the NCR coincides with that of the narrowband case without a direct link.

Numerical results further demonstrated that NCRs can outperform both passive and active RISs when deployed close to the UE, where only moderate amplification levels are required. Conversely, when the RIS is active and sufficiently large, or placed near the BS or UE, RIS-based solutions consistently achieve superior performance. These results indicate that neither technology is universally optimal; instead, the choice between RISs and NCRs should be guided by deployment geometry, system bandwidth, hardware constraints, and performance requirements. Future work includes extending the analysis to multi-user scenarios and networks with multiple NCRs.

\appendices

\section{Proof of Theorem 1}

We first show that the SNR in \eqref{eq:SNR-complicated-activeRIS2} is maximized when all RIS amplitude gains are identical. Observe that the SNR depends on the amplitudes through the terms
\begin{equation}
\left(\sum_{n=1}^N |\psi_{{\rm A},n}|\right)^2
\quad\text{and}\quad
\sum_{n=1}^N |\psi_{{\rm A},n}|^2,
\end{equation}
where the former appears in the numerator and the latter appears in both the numerator and the denominator.

Assume, by contradiction, that the optimal solution contains non-equal amplitudes $\{|\psi_{{\rm A},n}|\}$. Consider an alternative feasible configuration in which all amplitudes are set equal while keeping the sum $\sum_{n=1}^N |\psi_{{\rm A},n}|$ unchanged. Since the function $\sum_{n=1}^N |\psi_{{\rm A},n}|^2$ is strictly convex, Jensen’s inequality implies that this equal-amplitude configuration strictly decreases $\sum_{n=1}^N |\psi_{{\rm A},n}|^2$.

Consequently, the dominant quadratic gain term in the numerator remains unchanged, whereas the penalty terms involving $\sum_{n=1}^N |\psi_{{\rm A},n}|^2$ in both the numerator and the denominator are strictly reduced, which yields a strictly larger SNR. This contradicts the assumed optimality of unequal amplitudes.

Therefore, the optimal RIS amplitude gains must be identical across all RIS elements. We let $\alpha_{\rm A-RIS}$ denote the identical amplitude for each RIS element. Further defining, 
\begin{align}
  &  \mathcal{A}= PM\beta_1\beta_2N^2-P\beta_2N|\vect{h}_3^{\Htran}\vect{a}_{2,1}|^2,\\
  &\mathcal{B} = 2PN\sqrt{\beta_1\beta_2}|\vect{h}_3^{\Htran}\vect{a}_{2,1}|, \\
  &\mathcal{C}=\sigma^2\beta_2MN,
\end{align}
the SNR can be written in terms of $\alpha_{\rm A-RIS}$ as
\begin{align}
    \mathrm{SNR}_{\rm A-RIS}=\underbrace{\frac{\mathcal{A}\alpha_{\rm A-RIS}^2+\mathcal{B}\alpha_{\rm A-RIS}}{\sigma^2+\mathcal{C}\alpha_{\rm A-RIS}^2}}_{\triangleq f(\alpha_{\rm A-RIS})}+\frac{P\Vert \vect{h}_3\Vert^2}{\sigma^2}.
\end{align}
Hence, to maximize the SNR we need to maximize the function $ f(\alpha_{\rm A-RIS})$ under the maximum amplitude constraint $\alpha_{\rm A-RIS}\leq \alpha_{\rm A-RIS, max}$. There are two cases to be evaluated:

\emph{Case 1:} $|\vect{h}_3^{\Htran}\vect{a}_{2,1}|=0$. In this case, we have $\mathcal{B}=0$ and $\mathcal{A}>0$ and the function $ f(\alpha_{\rm A-RIS})$ becomes monotonically increasing with $\alpha_{\rm A-RIS}$. Hence, the optimal amplification gain is $\alpha_{\rm A-RIS}=\alpha_{\rm A-RIS, max}$.

\emph{Case 2:} $|\vect{h}_3^{\Htran}\vect{a}_{2,1}|>0$. In this case, we first relax the maximum amplitude constraint, take the derivative of $ f(\alpha_{\rm A-RIS})$ with respect to $\alpha_{\rm A-RIS}$, and equate it to zero, i.e.,
\begin{align}
   f'(\alpha_{\rm A-RIS})= -\frac{\mathcal{B}\mathcal{C}\alpha_{\rm A-RIS}^2-2\mathcal{A}\sigma^2\alpha_{\rm A-RIS}-\mathcal{B}\sigma^2}{\left(\sigma^2+\mathcal{C}\alpha_{\rm A-RIS}^2\right)^2}=0. \label{eq:first-derivative}
\end{align}
Noting that $\mathcal{B}>0$ and $\mathcal{C}>0$, there is only one positive real root of the polynomial in the numerator of the above expression. This root is given as
\begin{align}
    \alpha_{\rm A-RIS}^{\star}=\frac{\mathcal{A}\sigma^2+\sqrt{\mathcal{A}^2\sigma^4+\mathcal{B}^2\mathcal{C}\sigma^2}}{\mathcal{B}\mathcal{C}},
\end{align}
Using the second derivative test at $\alpha_{\rm A-RIS}=\alpha_{\rm A-RIS}^{\star}$, we obtain
\begin{align}
    f''(\alpha_{\rm A-RIS}^{\star})& = \frac{4\mathcal{C}\alpha_{\rm A-RIS}^{\star} \left(\mathcal{B}\mathcal{C}\alpha_{\rm A-RIS}^2-2\mathcal{A}\sigma^2\alpha_{\rm A-RIS}-\mathcal{B}\sigma^2\right)}{\left(\sigma^2+\mathcal{C}(\alpha_{\rm A-RIS}^{\star})^2\right)^3} \nonumber\\
    &\quad - \frac{2\mathcal{B}\mathcal{C}\alpha_{\rm A-RIS}^{\star} - 2\mathcal{A}\sigma^2}{\left(\sigma^2+\mathcal{C}(\alpha_{\rm A-RIS}^{\star})^2\right)^2} \nonumber\\
    &=-\frac{2\sqrt{\mathcal{A}^2\sigma^4+\mathcal{B}^2\mathcal{C}\sigma^2}}{\left(\sigma^2+\mathcal{C}(\alpha_{\rm A-RIS}^{\star})^2\right)^2}<0.
\end{align}
Hence, $\alpha_{\rm A-RIS}^{\star}$ is the maximizing point. Considering also the maximum amplitude constraints, the optimal $\alpha_{\rm A-RIS}$ is obtained as
\begin{align}
    \alpha_{\rm A-RIS} = \min\left(\alpha_{\rm A-RIS}^{\star}, \alpha_{\rm A-RIS,max} \right).
\end{align}

\bibliographystyle{IEEEtran}
\bibliography{IEEEabrv,refs}

\begin{thebibliography}{10}
\providecommand{\url}[1]{#1}
\csname url@samestyle\endcsname
\providecommand{\newblock}{\relax}
\providecommand{\bibinfo}[2]{#2}
\providecommand{\BIBentrySTDinterwordspacing}{\spaceskip=0pt\relax}
\providecommand{\BIBentryALTinterwordstretchfactor}{4}
\providecommand{\BIBentryALTinterwordspacing}{\spaceskip=\fontdimen2\font plus
\BIBentryALTinterwordstretchfactor\fontdimen3\font minus
  \fontdimen4\font\relax}
\providecommand{\BIBforeignlanguage}[2]{{%
\expandafter\ifx\csname l@#1\endcsname\relax
\typeout{** WARNING: IEEEtran.bst: No hyphenation pattern has been}%
\typeout{** loaded for the language `#1'. Using the pattern for}%
\typeout{** the default language instead.}%
\else
\language=\csname l@#1\endcsname
\fi
#2}}
\providecommand{\BIBdecl}{\relax}
\BIBdecl

\bibitem{liu2021reconfigurable}
Y.~Liu, X.~Liu, X.~Mu, T.~Hou, J.~Xu, M.~Di~Renzo, and N.~Al-Dhahir,
  ``Reconfigurable intelligent surfaces: Principles and opportunities,''
  \emph{IEEE communications surveys \& tutorials}, vol.~23, no.~3, pp.
  1546--1577, 2021.

\bibitem{bjornson2021reconfigurable}
E.~Bj{\"o}rnson, {\"O}.~{\"O}zdogan, and E.~G. Larsson, ``Reconfigurable
  intelligent surfaces: Three myths and two critical questions,'' \emph{IEEE
  Communications Magazine}, vol.~58, no.~12, pp. 90--96, 2021.

\bibitem{zhang2022active}
Z.~Zhang, L.~Dai, X.~Chen, C.~Liu, F.~Yang, R.~Schober, and H.~V. Poor,
  ``Active {RIS} vs. passive {RIS}: Which will prevail in {6G}?'' \emph{IEEE
  Transactions on Communications}, vol.~71, no.~3, pp. 1707--1725, 2022.

\bibitem{zhi2022active}
K.~Zhi, C.~Pan, H.~Ren, K.~K. Chai, and M.~Elkashlan, ``Active {RIS} versus
  passive {RIS}: Which is superior with the same power budget?'' \emph{IEEE
  Communications Letters}, vol.~26, no.~5, pp. 1150--1154, 2022.

\bibitem{basar2024reconfigurable}
E.~Basar, G.~C. Alexandropoulos, Y.~Liu, Q.~Wu, S.~Jin, C.~Yuen, O.~A. Dobre,
  and R.~Schober, ``Reconfigurable intelligent surfaces for {6G}: Emerging
  hardware architectures, applications, and open challenges,'' \emph{IEEE
  Vehicular Technology Magazine}, 2024.

\bibitem{wen2024shaping}
C.-K. Wen, L.-S. Tsai, A.~Shojaeifard, P.-K. Liao, K.-K. Wong, and C.-B. Chae,
  ``Shaping a smarter electromagnetic landscape: {IAB, NCR, and RIS} in {5G}
  standard and future {6G},'' \emph{IEEE Communications Standards Magazine},
  vol.~8, no.~1, pp. 72--78, 2024.

\bibitem{tsai2010capacity}
L.-S. Tsai and D.-s. Shiu, ``Capacity scaling and coverage for repeater-aided
  {MIMO} systems in line-of-sight environments,'' \emph{IEEE transactions on
  wireless communications}, vol.~9, no.~5, pp. 1617--1627, 2010.

\bibitem{willhammar2025achieving}
S.~Willhammar, H.~Iimori, J.~Vieira, L.~Sundstr{\"o}m, F.~Tufvesson, and E.~G.
  Larsson, ``Achieving distributed {MIMO} performance with repeater-assisted
  cellular massive {MIMO},'' \emph{IEEE Communications Magazine}, vol.~63,
  no.~3, pp. 114--119, 2025.

\bibitem{topal2025fair}
O.~A. Topal, {\"O}.~T. Demir, E.~Bj{\"o}rnson, and C.~Cavdar, ``Fair and
  energy-efficient activation control mechanisms for repeater-assisted massive
  mimo,'' \emph{arXiv preprint arXiv:2504.03428}, 2025.

\bibitem{demir2026pa}
{\"O}.~T. Demir and E.~Bj\"ornson, ``Network-controlled repeaters under power
  amplifier non-linearities,'' in \emph{IEEE International Conference on
  Acoustics, Speech and Signal Processing (ICASSP)}.\hskip 1em plus 0.5em minus
  0.4em\relax Barcelona, Spain: IEEE, May 2026, pp. 1--5, accepted and to be
  presented.

\bibitem{evgur2026ofdm}
D.~Evg{\"u}r, O.~A. Topal, and {\"O}.~T. Demir, ``Capacity analysis of {OFDM}
  systems with a swarm of network-controlled repeaters,'' in \emph{IEEE
  International Conference on Acoustics, Speech and Signal Processing
  (ICASSP)}.\hskip 1em plus 0.5em minus 0.4em\relax Barcelona, Spain: IEEE, May
  2026, pp. 1--5, accepted and to be presented.

\bibitem{ayoubi2022network}
R.~A. Ayoubi, M.~Mizmizi, D.~Tagliaferri, D.~D. Donno, and U.~Spagnolini,
  ``Network-controlled repeaters vs. reconfigurable intelligent surfaces for 6g
  mmw coverage extension: A simulative comparison,'' in \emph{2023 21st
  Mediterranean Communication and Computer Networking Conference (MedComNet)},
  2023, pp. 196--202.

\bibitem{guo2022comparison}
H.~Guo, C.~Madapatha, B.~Makki, B.~Dortschy, L.~Bao, M.~{\AA}str{\"o}m, and
  T.~Svensson, ``A comparison between network-controlled repeaters and
  reconfigurable intelligent surfaces,'' \emph{arXiv preprint
  arXiv:2211.06974}, 2022.

\bibitem{sun2023performance}
Y.~Sun, B.~Duan, X.~Su, H.~Wang, Q.~Gu, J.~Jin, and Y.~Yuan, ``Performance
  analysis on reconfigurable intelligent surface and network-controlled
  repeater in {3GPP} release-18,'' \emph{Frontiers of Information Technology \&
  Electronic Engineering}, vol.~24, no.~12, pp. 1815--1828, 2023.

\bibitem{andersson2025repeater}
M.~Andersson, A.~Chowdhury, and E.~G. Larsson, ``Is repeater-assisted massive
  {MIMO} compatible with dynamic {TDD}?'' \emph{arXiv preprint
  arXiv:2510.20998}, 2025.

\bibitem{chowdhury2025performance}
A.~Chowdhury and E.~G. Larsson, ``On the performance of dual-antenna repeater
  assisted bi-static {MIMO ISAC},'' \emph{arXiv preprint arXiv:2511.17980},
  2025.

\bibitem{wang2024reconfigurable}
J.~Wang, W.~Tang, J.~C. Liang, L.~Zhang, J.~Y. Dai, X.~Li, S.~Jin, Q.~Cheng,
  and T.~J. Cui, ``Reconfigurable intelligent surface: Power consumption
  modeling and practical measurement validation,'' \emph{IEEE Transactions on
  Communications}, vol.~72, no.~9, pp. 5720--5734, 2024.

\bibitem{bjornson2020power}
E.~Bj{\"o}rnson and L.~Sanguinetti, ``Power scaling laws and near-field
  behaviors of massive mimo and intelligent reflecting surfaces,'' \emph{IEEE
  Open Journal of the Communications Society}, vol.~1, pp. 1306--1324, 2020.

\bibitem{gavriilidis2025active}
P.~Gavriilidis, D.~Mishra, B.~Smida, E.~Basar, C.~Yuen, and G.~C.
  Alexandropoulos, ``Active reconfigurable intelligent surfaces: Circuit
  modeling and reflection amplification optimization,'' \emph{arXiv preprint
  arXiv:2503.24093}, 2025.

\bibitem{bai2025repeater}
J.~Bai, A.~Chowdhury, A.~Hansson, and E.~G. Larsson, ``Repeater swarm-assisted
  cellular systems: Interaction stability and performance analysis,''
  \emph{arXiv preprint arXiv:2508.13593}, 2025.

\bibitem{bjornson2025enabling}
E.~Bj{\"o}rnson, F.~Kara, N.~Kolomvakis, A.~Kosasih, P.~Ramezani, and M.~B.
  Salman, ``Enabling {6G} performance in the upper mid-band by transitioning
  from massive to gigantic {MIMO},'' \emph{IEEE Open Journal of the
  Communications Society}, 2025.

\bibitem{3GPP5G}
\emph{{5G}; Study on channel model for frequencies from 0.5 to 100 GHz (Release
  14)}.\hskip 1em plus 0.5em minus 0.4em\relax {3GPP} {TR} 38.901, Jan. 2018.

\end{thebibliography}

\end{document}